\newtheorem{theorem}{Theorem}
\newtheorem{lemma}[theorem]{Lemma}
\newtheorem{definition}[theorem]{Definition}
\newtheorem{remark}[theorem]{Remark}
\newtheorem{claim}[theorem]{Claim}
\newcommand{\Eq}{\mathsf{Equality}}
\newcommand{\Com}{\mathsf{CommitteeElect}}
\newcommand{\localcom}{\mathsf{LocalCommitteeElect}}
\newcommand{\sparse}{\mathsf{SparseNetwork}}
\newcommand{\nulll}{\mathsf{Null}}
\newcommand{\gossip}{\mathsf{Gossip}}
\newcommand{\negl}{\mathsf{negl}}
\newcommand{\polylog}{\mathsf{polylog}}
\newcommand{\poly}{\mathsf{poly}}
\newcommand{\Otilde}{\tilde{O}}
\newcommand{\Real}{\mathsf{Real}}
\newcommand{\Ideal}{\mathsf{Ideal}}
\newcommand{\Sim}{\mathsf{Sim}}
\newcommand{\Adv}{\mathsf{Adv}}
\newcommand{\cZ}{\mathcal{Z}}
\newcommand{\cF}{\mathcal{F}}
\newcommand{\cH}{\mathcal{H}}
\newcommand{\SB}{\mathsf{SB}}
\newcommand{\out}{\mathsf{Out}}
\newcommand{\PKE}{\mathsf{PKE}}
\newcommand{\SKE}{\mathsf{SKE}}
\newcommand{\DS}{\mathsf{DS}}
\newcommand{\Gen}{\mathsf{Gen}}
\newcommand{\Genenc}{\mathsf{Gen_{enc}}}
\newcommand{\Gensig}{\mathsf{Gen_{sig}}}
\newcommand{\Sign}{\mathsf{Sign}}
\newcommand{\Vrfy}{\mathsf{Vrfy}}
\newcommand{\ct}{\mathsf{ct}}
\newcommand{\pk}{\mathsf{pk}}
\newcommand{\sk}{\mathsf{sk}}
\newcommand{\Comp}{\mathsf{Comp}}
\newcommand{\Enc}{\mathsf{Enc}}
\newcommand{\Dec}{\mathsf{Dec}}
\newcommand{\suppress}[1]{}
\newcommand{\under}[2]{\underbrace{#1}_{\substack{#2}}}
\renewcommand{\epsilon}{\varepsilon}
\DeclareRobustCommand\onedot{\futurelet\@let@token\@onedot}
\def\@onedot{\ifx\@let@token.\else.\null\fi\xspace}
\def\eg{\emph{e.g}\onedot} 
\def\ie{\emph{i.e}\onedot}
\crefname{algocf}{algorithm}{algorithms}
\Crefname{algocf}{Algorithm}{Algorithms}
\crefname{claim}{Claim}{Claims}
\begin{document}

\title{On the Communication Complexity of Secure Multi-Party Computation With Aborts}

\author{James Bartusek}
\orcid{0009-0000-9020-3589}
\affiliation{
  \institution{UC Berkeley}
  \city{Berkeley}
  \state{California}
  \country{USA}
}
\email{jamesbartusek@berkeley.edu}

\author{Thiago Bergamaschi}
\orcid{0009-0006-5579-8090}
\affiliation{
  \institution{UC Berkeley}
  \city{Berkeley}
  \state{California}
  \country{USA}
}
\email{thiagob@berkeley.edu}

\author{Seri Khoury}
\orcid{0000-0002-7491-5866}
\affiliation{
  \institution{UC Berkeley}
  \city{Berkeley}
  \state{California}
  \country{USA}
}
\email{seri_khoury@berkeley.edu}

\author{Saachi Mutreja}
\orcid{0009-0002-6825-7112}
\affiliation{
  \institution{Columbia University}
  \city{New York}
  \state{New York}
  \country{USA}
}
\email{saachi@berkeley.edu}

\author{Orr Paradise}
\orcid{0000-0001-8212-8558}
\affiliation{
  \institution{UC Berkeley}
  \city{Berkeley}
  \state{California}
  \country{USA}
}
\email{orrp@eecs.berkeley.edu}




\begin{abstract}
  A central goal of cryptography is Secure Multi-party Computation (MPC), where $n$ parties desire to compute a function of their joint inputs without letting any party learn about the inputs of its peers. Unfortunately, it is well-known that MPC guaranteeing output delivery to every party is infeasible when a majority of the parties are malicious. In fact, parties operating over a point-to-point network (\ie, without access to a broadcast channel) cannot even reach an \emph{agreement} on the output when more than one third of the parties are malicious (Lamport, Shostak, and Pease, JACM 1980).\\


  Motivated by this infeasibility in the point-to-point model, Goldwasser and Lindell (J. Cryptol 2005) introduced a definition of MPC that \emph{does not require agreement}, which today is generally referred to as MPC \emph{with selective abort}. Under this definition, any party may abort the protocol if they detect malicious behavior. They showed that MPC with selective abort is feasible for any number of malicious parties by implementing a broadcast functionality with abort. \\

  While the model of MPC with abort has attracted much attention over the years, little is known about its communication complexity over point-to-point networks. In this work, we study the communication complexity of MPC with abort and devise nearly-optimal communication efficient protocols in this model. Namely, we prove trade-offs between the number of honest parties $h$, the communication complexity, and the \emph{locality} of the protocols. Here, locality is a bound on the number of peers with which each party must communicate. Our results are as follows:

    \vspace{-3pt}
  
  \begin{enumerate}
      \item \textbf{Near-optimal communication:} A protocol with $\Otilde(n^2/h)$ communication complexity. 
      \item \textbf{Near-optimal locality:} A protocol with $\Otilde(n^3/h)$ communication complexity where each party communicates with only $\Otilde(n/h)$ other parties. 
      \item \textbf{Intermediate communication vs locality:} A protocol with communication complexity $\Otilde(n^3/h^{3/2})$ where each party communicates with only $\Otilde(n/\sqrt{h})$ parties. 
      \item \textbf{Lower bound:} An $\Omega(n^2/h)$ lower bound on the communication complexity. In particular, we show that any party must communicate with $\Omega(n/h)$ other parties. Our lower bound is inspired by a recent lower bound for Broadcast by Blum et. al (DISC 2023).
  \end{enumerate}
\end{abstract}




\maketitle

\section{Introduction}
Secure multi-party computation (MPC) \cite{Yao1986HowTG, Goldreich1987HowTP, BenOr1988CompletenessTF, Chaum1988MultipartyUS} is a fundamental primitive in cryptography. In an MPC protocol, $n$ parties desire to compute a joint function of their inputs, without letting any party learn about the input of another---other than what can already be learned from the output. Significant effort has been devoted to understanding how many bits must be communicated by parties during the execution of the protocol (hereafter \emph{communication complexity}) \cite{Franklin1992CommunicationCO, DamgardN07, Damgrd2010PerfectlySM, Asharov2012MultipartyCW, Damgrd2012MultipartyCF}.




MPC protocols are often constructed in a model where \emph{broadcast} is available, which enables each party to send a message to all other parties, and guarantees that the same message will be heard by all receiving parties. The assumption that broadcast is available has played a crucial role in several classic MPC results. However, in practice, networks commonly consist only of \emph{point-to-point} (also known as peer-to-peer) connections. In this point-to-point model, a malicious party may claim to have sent a message to all parties while only sending it to some. Achieving broadcast over point-to-point networks (without assuming a trusted preprocessing phase) is possible only when at most one third of the parties are malicious \cite{10.1145/357172.357176, 10.1145/323596.323602}. 

Motivated by this infeasibility in the point-to-point model, Goldwasser and Lindell \cite{Goldwasser2005SecureMC} introduced the notion of \emph{MPC with selective abort} (hereafter \emph{MPC with abort}). MPC with abort is a relaxation that allows each party to either correctly compute the function, or abort if malicious behavior is detected. As discussed in \cite{Goldwasser2005SecureMC}, it becomes possible to guarantee that all honest parties either correctly compute the function or abort, even when a majority of parties are malicious, and even when no broadcast channel is available. We are interested in the communication complexity of MPC in this setting (\ie, no trusted preprocessing, no broadcast channel, and no bound on the number of adversaries). We ask:
\begin{center}
    \textit{How much must the parties communicate to correctly compute $f$, or detect malicious behavior?}
\end{center} 

Note that Goldwasser and Lindell \cite{Goldwasser2005SecureMC} study the \emph{round complexity} of MPC with abort---that is, the number of rounds until $f$ is computed or the parties abort. However, not much is understood about MPC with abort in terms of \textit{communication complexity}. Indeed, even a single round of communication in their protocol may require $\Omega(n^2)$ communicated bits.

Our focus is on the behavior of the MPC protocol as the number of parties $n$ scales. In light of this, we try to optimize not just the communication complexity, but also the \emph{locality} of the protocol \cite{BoyleGT13}, \ie, the number of parties with which a single party communicates. Any protocol is trivially $(n-1)$-local, but better locality of, say, $\polylog(n)$ is desirable as $n$ grows, since each party then needs to secure and maintain far fewer channels.

\subsection{Contributions}

Following \cite{Goldwasser2005SecureMC}, our goal is to study how the communication complexity of MPC with abort over point-to-point networks scales as a function of both the total number of parties $n$, and (a lower bound on) the number of honest parties $h$ in the network. While we allow for the very basic setup of a shared common random string (CRS), we do not assume strong trusted preprocessing such as public key infrastructure (PKI). For simplicity, in the ensuing discussion we focus on computing constant-depth functions $f$ on $n$ constant-sized inputs with a single constant-sized output and omit the dependence on the security parameter. The security of all of our protocols relies on computational assumptions, namely, the Learning with Errors assumption \cite{Regev09}\footnote{Refer to \cref{subsection:prelim-mpc} for a description of parameters.}. Finally, we stress again that, for us, ``MPC with abort'' refers to MPC with selective abort over point-to-point channels. We defer formal statements and generalizations to the body.\footnote{We denote $g(n) = \Tilde{O}(f(n))$ if there exists a constant $c$ such that $g(n) \leq c f(n)\polylog f(n)$.}

Our first two contributions are two novel protocols achieving state-of-the-art communication complexity and locality (respectively):

\begin{theorem}\label{theorem:results_new_protocol}
    There exists a protocol for MPC with abort against static malicious adversaries using $\Tilde{O}(n^2/h)$ bits of communication, where $n$ is the number of parties of which at least $h$ are honest.
\end{theorem}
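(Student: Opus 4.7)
The plan is to delegate the computation to a small, randomly elected committee, which internally runs a standard MPC-with-abort protocol and then broadcasts the result to the $n$ parties. Concretely, using the CRS, all parties deterministically sample a committee $C \subseteq [n]$ of size $|C| = \Otilde(n/h)$ (e.g., by applying a CRS-derived PRF to party identifiers). A Chernoff bound ensures that $C$ contains $\Omega(\lambda)$ honest parties with probability $1 - \negl(\lambda)$ over the CRS; in particular, $C$ contains at least one honest party.

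The protocol would proceed in three phases. (i) The committee runs an internal MPC-with-abort (e.g., the Goldwasser--Lindell protocol, of cost $\Otilde(|C|^2)$) to generate a threshold FHE (TFHE) public key $\pk$ and secret-key shares, together with digital signature keypairs for each committee member. Each committee member then signs $\pk$ alongside the list of signature verification keys and sends the signed tuple to every one of the $n$ parties; each party accepts only if all $|C|$ signed copies it receives agree, and aborts otherwise. (ii) Each of the $n$ parties computes $\ct_i \gets \TFHEenc(\pk, x_i)$ and sends $\ct_i$ to every committee member, at a total cost of $n \cdot |C| = \Otilde(n^2/h)$. (iii) The committee homomorphically evaluates $f$ on $(\ct_1, \ldots, \ct_n)$ and runs a second internal MPC-with-abort to decrypt the resulting ciphertext into $y = f(x_1, \ldots, x_n)$; each committee member then signs and sends $y$ to every party, and each party outputs $y$ if all $|C|$ received signed copies agree, and aborts otherwise. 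The two internal committee MPCs together cost $\Otilde(|C|^2) = \Otilde(n^2/h^2)$, well within budget, and the inter-committee-to-all phases each cost $n \cdot |C| = \Otilde(n^2/h)$.

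The central obstacle is achieving selective abort without a broadcast channel: a malicious committee member may send inconsistent signed values to different honest parties, causing some to accept and others to abort. This asymmetry is exactly what selective abort allows, so the critical property to prove is that whenever an honest party outputs some $y$, we have $y = f(x_1, \ldots, x_n)$. This follows because acceptance requires all $|C|$ signed copies to agree, and the unforgeability of signatures guarantees that the honest committee member's signed copy is among them and contains the correct value $y$ output by the committee's internal MPC-with-abort (which is correct because it is itself an MPC-with-abort executed among a group containing an honest party). Security against a static malicious adversary then reduces to the semantic security of TFHE via a standard simulation-based argument, with the simulator invoking the ideal functionality to extract $y$ before having the honest committee member sign and release it. The subtlety to verify carefully is that the two recursive invocations of MPC-with-abort inside $C$ indeed compose, and that the signature scheme and TFHE can be jointly instantiated from LWE without a PKI, using the CRS for setup.
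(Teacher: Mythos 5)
Your high-level route is essentially the paper's: a committee of size $\Otilde(n/h)$ containing at least one honest member, key generation for an encrypted (threshold-FHE-style) functionality run as an internal MPC among the committee, all $n$ parties encrypting their inputs to the committee at cost $n\cdot|C| = \Otilde(n^2/h)$, and correctness of non-aborting parties enforced by requiring that all $|C|$ received copies of $\pk$ and of the output agree (note that this check already works without signatures: the honest committee member's copy is among the $|C|$ copies simply because it sends directly to each party, so unforgeability is doing no work here; the paper only introduces signatures in its multi-output generalization, to avoid forwarding a length-$n$ output through the whole committee).

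However, there is a genuine gap in phase (iii): you never ensure that the committee members hold \emph{consistent} copies of the ciphertexts $(\ct_1,\dots,\ct_n)$ before homomorphic evaluation and decryption. A corrupted non-committee party can send different ciphertexts to different honest committee members, so the input to the ``decrypt the evaluated ciphertext'' MPC is not well defined. If you resolve this by having the internal MPC take each member's full ciphertext list as input and abort on mismatch, the input length becomes $n\cdot\poly(\lambda)$ and the committee-internal MPC costs $\Otilde(|C|^2\cdot n) = \Otilde(n^3/h^2)$, blowing the $\Otilde(n^2/h)$ budget whenever $h<n$; if instead the functionality trusts a designated member's evaluated ciphertext, a corrupted member can substitute $\Enc_{\pk}(z)$ for arbitrary $z$, all honest members then forward the same $z$, every party's agreement check passes, and parties output a value that is not $f$ of any input vector --- so your claim ``whenever an honest party outputs $y$, $y=f(x_1,\dots,x_n)$'' cannot be proved, and the simulator has no well-defined corrupted inputs to extract. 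The paper closes exactly this hole with a cheap step you are missing: pairwise succinct equality tests (fingerprinting, $O(\lambda\log)$ bits per pair) among committee members on their concatenated received ciphertext lists, aborting on any mismatch, before invoking $\cF_{\Comp}$. A secondary caveat: deriving the committee deterministically from the CRS is sound only if the static adversary must fix its corruptions independently of the CRS; the paper instead uses in-protocol self-election with fresh coins (plus a size cap and pairwise equality checks on the committee views), which remains secure even if the adversary sees the CRS before corrupting.
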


\begin{theorem}\label{theorem:results-optimal-locality}
There exists a protocol for MPC with abort against static malicious adversaries using $\Tilde{O}(n^3/h)$ bits of communication, with locality $\Tilde{O}(n/h)$.
\end{theorem}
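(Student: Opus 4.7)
The plan is to overlay a sparse, publicly-sampled communication network on the $n$ parties, use it to emulate a broadcast-with-abort primitive, and then plug this primitive into the protocol of \cref{theorem:results_new_protocol}. The extra factor of $n$ in the communication (relative to \cref{theorem:results_new_protocol}) is the cost of this emulation over the sparse overlay.

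First, I would use the CRS to sample a random graph $G$ on the parties with degree $d = \Otilde(n/h)$. A Chernoff-plus-union-bound argument shows that, with overwhelming probability over the CRS, every party has $\Omega(\log n)$ honest neighbors in $G$, and the subgraph $G[H]$ induced by the (unknown) honest set $H$ is a good expander---in particular, connected with polylogarithmic diameter. Both properties hold before the adversary corrupts any party, since the CRS is fixed and the adversary is static.

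On top of $G$, I would build a sparse broadcast-with-abort subroutine: to broadcast a value $v$, the sender transmits $v$ along its $d$ edges and each recipient forwards along $G$ for $O(\log n)$ rounds, after which every party cross-checks with its neighbors the value they heard, and any discrepancy triggers an \emph{abort} signal that is itself gossiped along $G$. The expansion of $G[H]$ ensures that if any honest party flags an abort, all honest parties receive the signal before terminating. Each invocation costs $\Otilde(n \cdot d) = \Otilde(n^2/h)$ bits while preserving the $\Otilde(n/h)$ locality bound. To finish, I would cast the protocol of \cref{theorem:results_new_protocol} as $\Otilde(n)$ public announcements of $\Otilde(1)$-sized messages---\eg, one FHE ciphertext per party of its input and a designated committee's partial decryptions of the output---and replace each announcement by a sparse broadcast-with-abort call, giving total communication $\Otilde(n) \cdot \Otilde(n^2/h) = \Otilde(n^3/h)$ and locality $\Otilde(n/h)$.

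The hardest part will be the design and analysis of the sparse broadcast-with-abort subroutine: with the adversary controlling up to $n-h$ parties and seeing the CRS (and hence $G$) before corruption, the connectivity of $G[H]$ is delicate when $h$ is small, so the polylog factors in $d$, the gossip schedule, and the abort-propagation mechanism must be tuned carefully so that no malicious coalition can silently partition the honest parties into components that output inconsistent values.
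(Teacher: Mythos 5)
Your high-level route is the same as the paper's (sparse overlay graph of degree $\Otilde(n/h)$, a gossip-style broadcast-with-abort over it costing $\Otilde(n\cdot d)$ per announcement, and then $\Otilde(n)$ announcements of an FHE-based MPC, for $\Otilde(n^3/h)$ total), but there is a genuine gap in how you sample the overlay. You sample $G$ from the CRS, and your connectivity argument ("with overwhelming probability over the CRS, $G[H]$ is connected/an expander") is only valid when the honest set $H$ is fixed independently of $G$. As you yourself concede in your last paragraph, the adversary sees the CRS (hence $G$) before choosing its static corruptions, so $H$ is a function of $G$ and the Chernoff-plus-union-bound statement you need is ``\emph{every} candidate honest set of size $h$ induces a connected subgraph,'' which is simply false at degree $d=\Otilde(n/h)$: the adversary can corrupt the $d$ neighbors of one targeted party (feasible whenever $\Otilde(n/h)\le n-h$, i.e.\ for essentially the whole parameter range), isolating that honest party in $G[H]$. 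An isolated honest party's view is then fully simulated by the adversary, which can make it accept an output inconsistent with the other honest parties without any abort --- this is exactly the attack underlying the paper's lower bound (\Cref{theorem:lowe-bound}), so no tuning of polylog factors or of the abort-propagation schedule can rescue a publicly known graph.

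The paper sidesteps this by \emph{not} publishing the graph: in $\sparse$ (\Cref{alg:routing}) each party privately samples its own $d$ outgoing neighbors during the protocol, after the static corruptions are already fixed, so the connectivity of the honest-induced subgraph (\Cref{claim:connected}) is proven for a fixed $H$ against randomness the adversary never influences. The price of private sampling is that malicious parties could flood a single honest party with incoming edges to destroy locality, which the paper handles with the in-degree cap and abort rule in step~3 of $\sparse$; your CRS-based graph would not need that check, but for the reason above it does not work. If you replace your CRS sampling with this private-sampling-plus-degree-cap mechanism, the rest of your outline (forward each source's value once per party, abort on equivocation, and route the simultaneous-broadcast round of the FHE-based MPC over the overlay) matches the paper's proof and the $\Otilde(n^3/h)$ communication and $\Otilde(n/h)$ locality accounting goes through; also note that connectivity of $G[H]$ suffices --- the expansion/diameter claim is not needed since each party forwards each rumor only once regardless of round count.
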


We complement these protocols with nearly-matching lower bounds, which imply that \Cref{theorem:results_new_protocol,theorem:results-optimal-locality} are optimal for communication complexity and locality (respectively). Our proof is inspired by an ``indistinguishability argument", similar to a recent lower bound for Broadcast by Blum et al.~\cite{BlumBCL23}, which holds even in broader settings. 

\begin{restatable}{theorem}{thmLB}
\label{theorem:lowe-bound}
    Any protocol for MPC with abort against malicious adversaries requires $\Omega(n^2/h)$ bits of communication. Furthermore, such protocols must have locality $\Omega(n/h)$.
\end{restatable}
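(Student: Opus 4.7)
The plan is to first establish the locality lower bound---namely, that every honest party must have at least $\Omega(n/h)$ neighbors in the protocol's communication graph---and then to conclude the communication lower bound by summing degrees across the graph and using that each edge carries at least one bit. Following the Blum et al.~\cite{BlumBCL23} style indistinguishability argument hinted at in the statement, I would construct two closely related executions in which one honest party's view is computationally indistinguishable across them while the correct output must differ: if the party tries to output, it must be wrong in one world, and if it aborts, it violates correctness in the other.

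Fix any party $P$ and suppose, toward contradiction, that $P$ communicates with a neighborhood $N(P)$ of size $o(n/h)$. Choose a function $f$ whose output is robust to adversarial inputs (for concreteness, take $f$ to be majority), together with two input vectors $\vec{x}$ and $\vec{x}'$ that agree on $P$'s own input but differ on the inputs of the other honest parties $H := V \setminus (N(P) \cup \{P\})$, so that $f(\vec{x}) \neq f(\vec{x}'_{P \cup H}, \vec{z})$ for every choice of inputs $\vec{z}$ the adversary may pick for the corrupted parties. Since $|H| = n - 1 - |N(P)|$ vastly exceeds $|N(P)|$ in the relevant regime, such a choice is feasible (e.g.\ $\vec{x}$ all zeros versus $\vec{x}'$ with all $H$-parties holding $1$ and $P$ still $0$). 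Now consider an execution $\pi_1$ in which all parties are honest and hold $\vec{x}$, and an execution $\pi_2$ in which the adversary corrupts exactly $N(P)$ while the remaining honest parties hold $\vec{x}'$. In $\pi_2$, the adversary instructs $N(P)$ to follow the honest protocol in their ``$\vec{x}'$-role'' when communicating with the other honest parties, and to internally simulate $\pi_1$ to generate and send the corresponding $\vec{x}$-world messages to $P$. Because $P$ only receives bits from $N(P)$, its view in $\pi_2$ is computationally indistinguishable from its view in $\pi_1$.

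Correctness in $\pi_1$ forces $P$ to output $f(\vec{x})$ except with negligible probability; by indistinguishability the same holds in $\pi_2$. However, this output is inconsistent with every ideal-world output obtainable from the real honest inputs $\vec{x}'_{P \cup H}$, contradicting security of MPC with abort. The only escape---aborting in $\pi_2$---would, again by indistinguishability, force $P$ to abort in $\pi_1$ as well, violating correctness on the all-honest execution. Hence every party has locality $\Omega(n/h)$. Summing locality over all parties and dividing by two, the communication graph in any non-aborting execution has $\Omega(n^2/h)$ edges, each carrying at least one bit, yielding total communication $\Omega(n^2/h)$. The main obstacle I expect is selecting $f$ and the pair $(\vec{x}, \vec{x}')$ so that the output gap is robust to every adversarial $\vec{z}$, while verifying that the admissibility constraint of at least $h$ honest parties is compatible with $|N(P)| = o(n/h)$ across the relevant range of $h$; a secondary subtlety is maintaining the two virtual worlds simulated by $N(P)$ in $\pi_2$ on independent randomness so that both roles remain internally consistent.
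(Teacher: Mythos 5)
Your high-level plan (isolate one honest party, let corrupted neighbors simulate a fake world toward it, and invoke correctness-or-abort in two indistinguishable executions) is the same style of argument the paper uses, but your instantiation has a genuine gap: you have the static adversary ``corrupt exactly $N(P)$.'' In this model $N(P)$ is not a fixed graph neighborhood known in advance---it is a random, execution-dependent set (the protocol is randomized and the parties a node contacts may depend on its coins, its input, and the messages it receives), so a static adversary cannot commit to corrupting it. Worse, even if you define $N(P)$ from the all-honest execution $\pi_1$, nothing forces the set of parties $P$ talks to in $\pi_2$ to be contained in it: in $\pi_2$ the remaining honest parties hold the different input vector $\vec{x}'$ and are \emph{not} corrupted, so they are free to send messages directly to $P$ (and $P$, whose simulated $\pi_1$-view uses fresh randomness, may itself contact parties outside the corrupted set). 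Any such contact punctures the isolation and the claimed indistinguishability of $P$'s view collapses; your closing remark about ``independent randomness'' does not address this, since the problem is which parties get contacted, not how the two simulated transcripts are seeded.

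The paper's proof closes exactly this hole by inverting the quantifiers: the adversary corrupts the \emph{maximum} $n-h$ parties and chooses the $h-1$ honest parties other than the target $Q$ uniformly at random. If $\mathbb{E}[|N(Q)|] < n/(8(h-1))$, then by Markov (twice) the realized neighborhood $N(Q)$ contains no honest party with constant probability---whatever that neighborhood turns out to be---so every party $Q$ ends up talking to is corrupted and the impersonation attack goes through (run on Broadcast with abort, distinguishing the cases $Q=P$ and $Q\neq P$). Note also that this yields a bound on the \emph{expected} neighborhood size of every party, from which the $\Omega(n^2/h)$ communication bound follows by summing expectations, rather than a worst-case degree bound obtained by contradiction. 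To repair your write-up you would need to either randomize the honest set as the paper does, or otherwise argue that a small-locality protocol has a predictable neighborhood that a static adversary can cover---which is false in general.
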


Finally, although \Cref{theorem:results-optimal-locality} achieves near-optimal locality, it comes at a cost in terms of communication complexity over \Cref{theorem:results_new_protocol}. A natural question is whether this trade-off is inherent, or whether it is possible to sacrifice some locality to obtain better communication complexity.
We answer this question in the positive, by combining ideas from \Cref{theorem:results-optimal-locality} with our new communication-efficient protocol of \Cref{theorem:results_new_protocol} to obtain our final main result:

\begin{theorem}
    \label{theorem:results-local-tradeoff}
     There exists a protocol for MPC with abort against static malicious adversaries using $\Tilde{O}(n^3/h^{3/2})$ bits of communication, with locality $\Tilde{O}(n/\sqrt{h})$.
\end{theorem}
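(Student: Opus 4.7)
The plan is to interpolate between \Cref{theorem:results-optimal-locality} (low locality) and \Cref{theorem:results_new_protocol} (low communication) by enlarging the local committees used by the former and running an instance of the latter \emph{inside} each enlarged committee. The key quantitative observation is that a random committee of size $k$ drawn uniformly from the $n$ parties contains $\Tilde{\Omega}(kh/n)$ honest members with overwhelming probability, so that setting $k = \Tilde{O}(n/\sqrt{h})$ yields committees with $\Tilde{\Omega}(\sqrt{h})$ honest members---the right balance for the target bounds.

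The first step is to invoke the committee-election subroutine $\localcom$ used in \Cref{theorem:results-optimal-locality} with target committee size $k = \Tilde{O}(n/\sqrt{h})$ rather than $\Tilde{O}(n/h)$. The induced communication overlay is a sparse graph of per-party locality $\Tilde{O}(n/\sqrt{h})$, meeting the locality target of the theorem, and each committee contains $\Tilde{\Omega}(\sqrt{h})$ honest parties with overwhelming probability.

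The second step is to replace the message-forwarding mechanics of \Cref{theorem:results-optimal-locality} by running the threshold-FHE-based protocol of \Cref{theorem:results_new_protocol} inside each committee, treating the committee as the universe of parties for an internal MPC-with-abort. Applied to $k = \Tilde{O}(n/\sqrt{h})$ parties of which $\Tilde{\Omega}(\sqrt{h})$ are honest, \Cref{theorem:results_new_protocol} costs $\Tilde{O}(k^{2}/\sqrt{h}) = \Tilde{O}(n^{2}/h^{3/2})$ bits of communication per committee. Summed over the $n$ committees of the sparse overlay, together with the $\Tilde{O}(n^{2}/\sqrt{h})$ cost of each party disseminating its input across its $\Tilde{O}(n/\sqrt{h})$ edges, the total communication is $\Tilde{O}(n^{3}/h^{3/2})$, matching the theorem statement.

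The main obstacle I anticipate is the security analysis: inside each committee the honest fraction is only $\Tilde{O}(\sqrt{h}/(n/\sqrt{h})) = \Tilde{O}(h/n)$, which can be vanishing, so one must verify that the inner TFHE-based subprotocol of \Cref{theorem:results_new_protocol} tolerates such a dishonest majority---which it does, since it already targets MPC with selective abort. A secondary challenge is making the per-party committee outputs consistent (so that honest parties jointly agree on the output or jointly abort), which I would handle by reusing the cross-committee gossip and abort-propagation machinery of \Cref{theorem:results-optimal-locality}. The UC-style composition of $\localcom$, the CRS setup, and the inner TFHE protocol then follows the template already established for \Cref{theorem:results_new_protocol,theorem:results-optimal-locality}, up to the parameter change $k: \Tilde{O}(n/h) \mapsto \Tilde{O}(n/\sqrt{h})$.
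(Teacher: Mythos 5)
There is a genuine gap. Your high-level parameter choice is the right one (a committee of size $\Tilde{O}(n/\sqrt{h})$ elected over a sparse overlay, containing $\Tilde{\Omega}(\sqrt{h})$ honest members, with the inner MPC delegated to it), but the structure you build around it does not work. First, \Cref{theorem:results-optimal-locality} does not use ``local committees'' at all---it is a sparse routing network plus responsible-gossip all-to-all broadcast---so there is no collection of ``$n$ committees of the sparse overlay'' to sum over, and running an independent instance of \Cref{theorem:results_new_protocol} inside each party's neighborhood does not compute the global function $f(x_1,\dots,x_n)$, since no neighborhood holds all $n$ inputs. Second, and more importantly, you never address the step that is the actual crux of the theorem: how all $n$ (encrypted) inputs reach the committee while \emph{every} party, including committee members, keeps locality $\Tilde{O}(n/\sqrt{h})$. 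The two natural readings of your ``each party disseminates its input across its $\Tilde{O}(n/\sqrt{h})$ edges'' both fail: if the inputs are routed/gossiped to the committee over the overlay, each of the $n$ inputs traverses the $\Tilde{O}(n^2/\sqrt{h})$ edges and communication regresses to $\Tilde{O}(n^3/\sqrt{h})$ (this is exactly the obstruction the paper flags in \Cref{subsection:overview-local}); if instead every party sends its ciphertext directly to all committee members, the total communication is fine but each committee member hears from all $n$ parties, so its locality is $n$, violating the claimed bound.

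The paper's proof closes precisely this gap with an idea absent from your proposal: after the local committee election, the routing network is discarded and each committee member $c$ samples a random ``responsibility'' set $S_c\subset[n]$ of size $n/\sqrt{h}$; a hitting-set/covering claim (\Cref{claim:honest_connected}) shows every party is covered by at least one \emph{honest} committee member, so its ciphertext reaches the committee even though it only talks to the $\Tilde{O}(n/\sqrt{h})$ members covering it. Committee members then exchange and equality-test their received ciphertext lists (cost $\Tilde{O}(|C|^2\cdot|S_c|)=\Tilde{O}(n^3/h^{3/2})$, which together with the $\localcom$ gossip is the dominant term), run the encrypted functionality once, and return the output through the same sets $S_c$. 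Your accounting of ``$\Tilde{O}(k^2/\sqrt{h})$ per committee times $n$ committees'' happens to land on the right total, but it does not correspond to a protocol that delivers all inputs to the parties computing $f$; without the covering-set mechanism (or an equivalent way to bound committee-side locality while guaranteeing input delivery and intra-committee consistency), the proof does not go through.
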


\suppress{
\begin{table*}[ht]
\centering
\begin{tabular}{|>{\centering\arraybackslash}m{4cm}|>{\centering\arraybackslash}m{3cm}|>{\centering\arraybackslash}m{3cm}|>{\centering\arraybackslash}m{4cm}|}
\hline
\textbf{Authors} & 
\textbf{Number of honest parties \( h \)} & 
\textbf{Locality} & 
\textbf{Communication Complexity}\\ \hline
\cite{Goldwasser2005SecureMC} & Any \( h\geq 1 \) & \( n \) & \( O(n^3) \)\\ \hline
\cite{bartusek2021}  & Any \( h\geq 1 \) & \( n \) & \( O(n^2 \cdot \text{polylog}(n)) \)\\ \hline
\cite{bartusek2021}  & \( h \geq \frac{2n}{3} \) & \( \polylog n \) & \( O(n \cdot \text{polylog}(n)) \)\\ \hline
\cite{bartusek2021} & \( h = O(1) \) & \( \Omega(n) \) & \( \Omega(n^2) \) \\ \hline
\textbf{This Work} & Any \( h\geq 1 \) & \( n \) & \textbf{\( O(n^2\cdot\polylog(n)/h) \)}\\ \hline
\textbf{This Work} & Any \( h\geq 1 \) & \( \Omega(n/h) \) & \textbf{\( \Omega(n^2/h) \)} \\ \hline
\textbf{This Work} & Any \( h\geq 1 \) & \( O(n\polylog(n)/\sqrt{h}) \) & \textbf{\( O(n^3/h^{3/2}) \)}\\ \hline
\end{tabular}
\caption{Comparison of Communication Complexity Results. Our main result is given in the 5th row, where we show a protocol with improved communication complexity compared to prior work. The 6th row describes our lower bound result, which shows that our protocol is optimal up to a polylogarithmic factor. Another aspect that we were curious about is the \emph{locality} of our protocols. Here, locality means the worst-case number of parties that any party communicates with (\ie, the maximum number of neighbors of a party). The 7th row describes a result with worse communication complexity, but better locality.}
\label{tab:my_label}
\end{table*}
}

\subsection{Prior work}\label{sec:PriorWork}


The focus of this work is on how the communication complexity of MPC scales with the number of parties and tolerated adversaries, which has a long and rich history of study in various models.

There has been significant effort towards optimizing the communication complexity of full-fledged Byzantine agreement (\ie, without abort) \cite{Dolev82,DolevR85,KingS09,KingS11,Braud-SantoniGH13}. These works study the problem with either no setup, or assuming a public-key infrastructure (PKI) where each party begins the protocol with knowledge of a public key associated with each other party.

There have also been many works that study the more general question of communication complexity of \emph{MPC with guaranteed output delivery} (\ie, without abort), either in the information-theoretic setting (\eg \cite{HirtMP00,DamgardN07}), or in the computational setting (\eg \cite{BoyleGT13,BoyleCG24}).

In our work, we study the relaxed model of Goldwasser and Lindell \cite{Goldwasser2005SecureMC} where the computation is considered secure even if some parties abort in the presence of adversaries. This notion of \emph{security with abort} is now ubiquitous in the MPC literature, and is the de facto standard model for studying MPC in the dishonest majority setting \cite{KatzOS03,Asharov2012MultipartyCW,MukherjeeW16,10.1145/3566048,BenhamoudaL18,ChoudhuriCG0O20}. Some of these works (\eg \cite{Asharov2012MultipartyCW}) study the communication complexity of MPC with abort, however, to the best of our knowledge, no prior work has been dedicated to optimizing the communication complexity of MPC with abort \emph{over point-to-point networks}.
We also remark that, starting with \cite{doi:10.1137/0212045}, there has been a line of work devoted to studying the communication complexity of \emph{broadcast} protocols over point-to-point networks in the dishonest majority setting, including recent progress on upper and lower bounds \cite{CohenDKS23}. However, their setting is different from ours in that their end goal is broadcast (rather than general MPC), and they assume the existence of public-key infrastructure (PKI) between parties.


Lastly, we also study communication complexity in the setting of \emph{locality} \cite{BoyleGT13}, where each party is may communicate with only a limited number of other parties. Prior work \cite{BoyleGT13} has also considered this question, but \emph{only} in the honest majority setting, and with strong setup assumptions (\eg PKI).


\subsection{Organization}

We organize the remainder of this work as follows. In \Cref{sec:Tech}, we provide a technical overview of our new protocols. In \Cref{sec:preli}, we provide preliminaries and basic definitions. In \Cref{sec:upper}, we present our communication-optimal protocol for MPC with abort of \Cref{theorem:results_new_protocol}. Subsequently, in \Cref{sec:lower}, we present our lower bounds. Finally, in \Cref{section:local}, we present our protocols with locality.

\section{Technical overview}\label{sec:Tech}
\label{sec:techGL}

The starting point for our work, and a key building block in our later constructions, is a simple protocol for MPC with abort with $\tilde{O}(n^2)$ communication complexity which works for all $h<n$. In \Cref{subsection:overview-goldwasser-lindell} we describe how small modifications to the protocol of Goldwasser and Lindell \cite{Goldwasser2005SecureMC} achieve this bound. In the ensuing  \Cref{subsection:overview-optimal_communication}, we describe how to ``bootstrap'' this result using subsampling techniques and stronger cryptographic primitives to achieve our result in \Cref{theorem:results_new_protocol}. Finally, in \Cref{subsection:overview-local}, we describe our protocols with locality.


\subsection{The case of \texorpdfstring{$h<n$}{h<n}}
\label{subsection:overview-goldwasser-lindell}

We begin by outlining the protocol for MPC with abort introduced by Goldwasser and Lindell~\cite{Goldwasser2005SecureMC}. Their original protocol requires $O(n^3)$ bits of communication, and leverages the observation that MPC can be achieved by a constant number of invocations of All-to-All Broadcast (see also~\cite{CrepeauGT95,MukherjeeW16} and references therein). In All-to-All Broadcast, every party wants to broadcast a message to all the other parties. Although Broadcast is impossible if the fraction of honest parties is smaller than $2/3$ \cite{Lamport1982TheBG}, Broadcast \textit{with abort} is possible for any number of honest parties.  \\

\noindent \textbf{Single-source broadcast with abort \cite{Goldwasser2005SecureMC}}
\begin{enumerate}
    \item \textbf{Broadcast step:} A party $P$ that wants to broadcast a message $m$, sends $m$ to all other parties.
    \item \textbf{Verification step:} Each party sends to all other parties the message that it received from $P$ in the first step.
    \item \textbf{Output step:} If a party received two different messages in the second step, it aborts. Otherwise, if a party got the same message from all other parties in the second step, it outputs the value of that message.
\end{enumerate}

Observe that the protocol uses all the links in the network, incurring $O(n^2)$ communication. \\

\noindent \textbf{All-to-all broadcast with abort} The All-to-All Broadcast with abort protocol of~\cite{Goldwasser2005SecureMC} simply runs $n$ invocations of the above single-source Broadcast with abort protocol in parallel. Thus, it requires $O(n^3)$ communication, and implies the same communication complexity for MPC.\\

\noindent \textbf{Succinct equality testing and efficient verification} We begin with a simple observation which allows us to ``shave-off'' a factor of $n$ on the communication complexity of All-to-All Broadcast (with aborts). Which, in turn, gives us a simple $\Tilde{O}(n^2)$ protocol for MPC (with aborts) for any $h<n$. Informally, we optimize the verification step above by concatenating the messages from the $n$ parallel runs of Broadcasts with abort, and use hash functions to pairwise test the equality between these strings using only $O(\log(n))$ bits of communication (see \Cref{sec:equality}). While this protocol still communicates on all edges/links, it communicates only $O(\log n)$ bits on each edge, resulting in $O(n^2\log n)$ bits total.


\subsection{Protocols with optimal communication complexity}
\label{subsection:overview-optimal_communication}

\textbf{The $\Theta(n^2)$ barrier} Goldwasser and Lindell \cite{Goldwasser2005SecureMC} rely on All-to-All Broadcast to achieve MPC. Clearly, since there are $\Theta(n^2)$ edges in the network, any protocol for All-to-All Broadcast requires $\Omega(n^2)$ bits of communication. One may wonder whether it is actually possible to overcome this communication barrier. At first glance, to achieve MPC with abort on point-to-point networks, it may seem that each pair of parties needs to \emph{verify} that they have consistent views of the rest of the network. Without such verification, the adversary could potentially mislead the parties causing them to arrive to different outputs \textit{without aborting}.\\  


\noindent \textbf{Going below $\Theta(n^2)$} We overcome this issue by allowing honest parties to verify their views through other honest parties. At a high level, our protocol picks a random subset of the nodes which we refer to as \emph{the committee}, and delegates the MPC task to this committee.\footnote{This approach is inspired by committee-based consensus protocols used in large-scale blockchains, albeit applied to a highly non-standard (strong) dishonest majority setting. The idea of leveraging a committee can be traced back to \cite{BrachaT84} (PODC 84) in an application to distributed deadlock detection; see \cite{Li2020ABD, Xu2023ASO} for a modern review. }  Electing this subset in a communication-efficient manner, as well as delegating the computation itself, require overcoming several challenges. While the former is addressed via a combination of techniques from \Cref{subsection:overview-goldwasser-lindell}, overcoming the latter requires additional insight described next.

To delegate the computation we leverage an encrypted functionality akin to Threshold Fully Homomorphic Encryption (TFHE) \cite{Asharov2012MultipartyCW, Boneh2018ThresholdCF}. Informally, the committee members generate public key and secret key pairs, where the secret key is $k$-out-of-$k$ secret shared among the parties. So long as there is at least one honest party in the committee, no information about the secret key---and consequently, about the honest parties inputs---is revealed to the adversary. 


 The relevant details on the encrypted functionality are provided in \Cref{subsection:prelim-mpc}, and a formal description of our protocol is presented in \Cref{sec:upper}. In what remains of this subsection, we provide an informal description of the main steps in our protocol. If at any step of our protocol a verification step fails, the parties immediately abort. We omit this detail from the high-level description below for conciseness, but include it in the formal description in \Cref{sec:upper}. \\
 
\noindent \textbf{Main steps of our protocol}
 \begin{enumerate}
     \item A committee of size $k = O(n h^{-1}\log n)$ is selected uniformly at random via a ``self-election'' scheme. By a ``Hitting-Set'' argument, at least $1$ honest party is elected. 
     \item The committee members notify the rest of the network of their election, and subsequently verify that they each have consistent views of the other committee members. 
     \item The committee creates a public/secret key pair $(pk,sk)$, where the secret key is secret-shared into $(sk_j)_{j\in [k]}$.
     \item Each party in the committee sends the public key $pk$ to all other parties.
     \item All $n$ parties encrypt their inputs using $pk$, and send their ciphertexts to the committee members.
     \item Using MPC, the committee members compute the output of the function from the encrypted inputs and the secret key shares.
     \item Finally, the committee members forward the output to the other parties.
 \end{enumerate}

 The full description of the protocol, as well as its formal details, are provided in \Cref{sec:upper}.\\

\noindent \textbf{Communication complexity} In the protocol above, each party only communicates with the committee. As there are only $O(n\log n/h)$ parties in the committee, and since each other party only sends and receives $O(\log n)$ bits to and from the committee (as we show in \Cref{sec:upper}), it follows that the total communication complexity is $O(n^2\polylog n/h)$.

\subsection{Protocols with locality}
\label{subsection:overview-local}

We build on our communication-efficient protocols for MPC with abort by designing protocols that achieve locality. \\

\noindent \textbf{Establishing a sparse routing network}
The starting point for our local protocols is the observation that our protocol for All-to-All Broadcast with abort can be sparsified. Indeed, if the parties were given a sparse communication network on which to communicate, with the guarantee that all the honest parties are connected, then at least in principle they could ensure consistent views of each other's inputs. 

To establish this sparse routing network, each party locally samples $d = \Theta(n h^{-1}\log n)$ nodes at random, and attempts to establish them as their ``next hop'' in the network. To ensure locality, parties must be wary not to accept too many incoming connections, which is why we set up a \emph{bidirectional} network (\ie, the ``next hop'' relation is symmetric). Indeed, if any party detects too many incoming connections, it notifies its next hops and aborts. The threshold for ``too many'' is chosen such that, with all but negligible probability, it is surpassed only if the party was maliciously targeted by the adversary (a distributed denial of service attack, if you will).\\
    
    
    
\noindent \textbf{Responsible routing on the network}
It is tempting to declare victory (\ie, a local protocol for simultaneous broadcast with abort): after all, all honest parties are connected, so they may simply broadcast their message over the routing network. Not quite, since the honest parties do not know each other's identities. Furthermore, without any public-key infrastructure, the adversary may forge messages as they please; that is, if party $P$'s next hop $X$ is claiming to be forwarding a message originating in some source party $S$ ($S\to X \to P$), party $P$ has no way of ascertaining that the message indeed originated with $S$.
    
This is overcome using a similar strategy to that employed by Goldwasser and Lindell \cite{Goldwasser2005SecureMC}: when party $P$ hears (via one of its hops) that party $S$'s input is $x$, it forwards this information to all of its other hops---unless it has \emph{already heard} that $S$'s input was some other $x'\neq x$ (an equivocation), in which case it warns all its hops, and aborts. Naturally, any party that receives such warning will then forward it and abort. We refer to this technique as \emph{responsible gossip}, since parties spread rumors about other parties across the network, but only if the rumors are not contradictory.
    
Total communication is kept low by requiring each honest party $P$ to forward a rumor (``$S$ has input $x$'') at most once. Thus, honest parties each send at most $n + 1$ different messages (accounting for the warning message) to each of their $O(d)$ hops, for a total of $O(n^2\cdot d) = O(n^3h^{-1}\log n)$ communication per simultaneous broadcast. Since All-to-All Broadcast (with aborts) is sufficient to achieve MPC (with aborts), this all but concludes our description of the proof of \Cref{theorem:results-optimal-locality}.\\

\noindent \textbf{Local committee election} While conceptually simple, \Cref{theorem:results-optimal-locality} incurs a serious blowup in communication complexity. It is then natural to ask whether one can combine our network sparsification techniques with our committee-based, communication-optimal protocols, in order to establish tradeoffs between locality and communication complexity. 

To do so, the starting point is to leverage our ``responsible gossip'' protocol to perform the committee election \emph{locally}. That is, the committee members $C\subset [n]$ announce their self-election by sending their ID via the routing network, using $O(|C|\cdot d\cdot n)$ bits total. While again it may seem tempting to claim victory, unfortunately simply electing the committee is not enough to decrease communication: for the parties' (encrypted) inputs to even reach the committee using the routing network, each of $n$ different inputs would have to be sent over the $O(d\cdot n) = \Tilde{O}(n^2h^{-1})$ edges of the graph, which regresses to our previous bound of $\Tilde{O}(n^3h^{-1})$.\\

\noindent \textbf{Sparsifying the committee--network communication, and a covering claim}
Instead, after the committee is elected, we entirely dispense with the routing network, and attempt to sparsify the committee--network interaction. To do so, the committee members partition the network by independently sampling (overlapping) subsets of the parties $S_u \subset [n]$ of size $s$, for each $u\in C$, which they are ``responsible'' for. So long as each party in the network is connected to---or ``covered'' by---at least one honest committee member, then its input will correctly reach the committee. To ensure every committee member contains the encrypted input of every party, the committee members forward to each other the inputs they received from their subsets $S_u$. Once the honest committee members have a consistent view of all the parties' (encrypted) inputs, the actual computation can be correctly delegated. 

Unfortunately, to ensure the honest committee members $C\cap H$ ``cover'' the entire network, we need to increase the committee size. In particular, so long as $s\cdot |C\cap H| \approx n\cdot \log n$, a hitting set argument ensures the covering claim. The resulting locality of the protocol is then bounded by the sizes $s, |C|, $ and the degree of the routing graph. In turn, the total communication complexity dominated by 
\begin{gather}
    \under{O(|C|\cdot d\cdot n)}{\text{Local Committee} \\ \text{Election}}
    \ \ +
    \under{\Tilde{O}(|C|^2\cdot s)}{\text{Committee--Network} \\ \text{Interaction}}
    +
    \under{\Tilde{O}(|C|^2)}{\text{Delegating} \\ \text{Computation}}  \\
    \leq \tilde{O}\big(|C|n^2 h^{-1}+ |C|^2\cdot s\big).
\end{gather}

\noindent Balancing the choice of $|C|$ and $s$ subject to the covering claim reveals the optimal choice of $|C| = s = \Tilde{O}(nh^{-1/2})$, which gives us the advertised bounds of \Cref{theorem:results-local-tradeoff}.


\section{Preliminaries}\label{sec:preli}

\subsection{Model, basic definitions, and assumptions}


\paragraph{The network} $n$ parties want to compute a joint functionality $\cF$ on their private inputs in a synchronized point-to-point network. That is, any pair of parties may communicate directly via a communication channel. We assume parties have a lower bound on the number of honest parties $h$ in the network; the tighter this bound, the more efficient the resulting protocol.

\paragraph{The adversary} We consider a \textit{static} and \emph{malicious} adversary. Of the $n$ parties, the adversary may choose before the protocol begins to corrupt up to $n-h$ parties, and we refer to the set of at least $h$ honest parties as $\cH$. The adversary may deviate from the protocol in an attempt to trick the honest parties into computing a faulty result, or to learn something about their inputs. 

\paragraph{Communication complexity} Our main measure of protocol efficiency is the number of bits communicated between parties during computation. However, as discussed in \cite{BoyleGT13}, an adversary may flood the network with messages which, intuitively, should not be counted towards the communication complexity of a protocol. To avoid this issue, we define the communication complexity of a protocol to be the total number of bits sent by all parties \emph{if they were to honestly follow the protocol}. For a randomized protocol, we take the worst-case over all possible executions (in the all-honest case). Indeed, throughout this work, we will assume that each honest party aborts if it were to receive more bits than prescribed by the protocol.


\paragraph{The security parameter $\lambda$} Throughout the paper, we use $\lambda$ to denote the security parameter, which is a variable that quantifies the level of security provided by a cryptographic algorithm, typically represented as the size (in bits) of the keys used.

\subsection{Succinct equality testing}\label{sec:equality}

In this section we state the following folklore lemma, which says that two parties with strings of length $n$ bits each, can detect whether their strings are equal with high probability while exchanging only $O(\log n)$ bits. 

\begin{lemma}\label{lemma:equality}
  Fix $\lambda\in \mathbb{N}$. Two parties with inputs $m_1, m_2\in \{0, 1\}^n$ can detect whether $m_1$ and $m_2$ are equal, using a randomized protocol which exchanges only $O(\lambda\log n)$ bits and such that:
  \begin{enumerate}
      \item If $m_1 = m_2$, then the protocol outputs 1.
      \item If $m_1\neq m_2$, then the protocol outputs 0 with probability at least $1-\frac{1}{n^\lambda}$.
  \end{enumerate}

\end{lemma}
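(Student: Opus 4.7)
My plan is to use standard Rabin--Karp style polynomial fingerprinting over a finite field, instantiated with a field size chosen precisely to meet the claimed error bound. Concretely, let $q = 2^{\ell}$ for $\ell = \Theta(\lambda \log n)$ (with the constant tuned below), and view each string $m_i \in \{0,1\}^n$ as the coefficient vector of a polynomial $P_{m_i}(X) \in \mathbb{F}_q[X]$ of degree at most $D = \lceil n/\ell \rceil$ by breaking $m_i$ into blocks of $\ell$ bits. The protocol is then: Party~1 samples $r \in \mathbb{F}_q$ uniformly at random, computes $y_1 = P_{m_1}(r)$, and sends the pair $(r, y_1)$ to Party~2; Party~2 computes $y_2 = P_{m_2}(r)$ and outputs $1$ if $y_1 = y_2$ and $0$ otherwise.

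For correctness, if $m_1 = m_2$ then $P_{m_1} \equiv P_{m_2}$, so $y_1 = y_2$ on every $r$, and the protocol outputs $1$ with probability $1$. For soundness, if $m_1 \neq m_2$ then $Q := P_{m_1} - P_{m_2}$ is a nonzero polynomial in $\mathbb{F}_q[X]$ of degree at most $D$, so by the Schwartz--Zippel (or standard ``number of roots'') bound it has at most $D$ roots, giving
\[
\Pr_r[\,y_1 = y_2\,] \;=\; \Pr_r[\,Q(r) = 0\,] \;\leq\; \frac{D}{q} \;=\; \frac{\lceil n/\ell \rceil}{2^{\ell}}.
\]
Choosing $\ell = (\lambda + 2)\lceil \log n \rceil$ ensures this ratio is at most $1/n^\lambda$, as required.

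For communication, the message $(r, y_1)$ consists of two elements of $\mathbb{F}_q$, so its length is $2\ell = O(\lambda \log n)$ bits. Party~2's output is a single bit, so the total exchange is $O(\lambda \log n)$ bits.

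I do not anticipate any serious obstacles: the only step that requires care is matching the field size to the error bound, which is just the bookkeeping above. One minor point worth flagging is that the protocol as described requires no setup beyond the local randomness of Party~1, which is convenient when invoking the lemma inside larger protocols; if desired, $r$ can instead be drawn from a CRS, saving the transmission of $r$ but this is not needed to meet the stated bound.
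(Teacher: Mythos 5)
Your proof is correct, but it takes a different route from the paper. The paper's protocol (its Algorithm~1) is the random-prime modular fingerprint: $P_1$ samples a prime $p \in [n^\lambda]$ uniformly at random and sends $p$ together with $m_1 \bmod p$ (interpreting $m_1$ as an $n$-bit integer); $P_2$ accepts iff $m_2 \bmod p$ matches. Its soundness rests on counting the prime divisors of the nonzero integer $m_1 - m_2$ against the density of primes in $[n^\lambda]$ (the paper outsources this analysis to standard references). Your protocol instead uses polynomial fingerprinting over $\mathbb{F}_{2^\ell}$: block the strings into field elements, evaluate the resulting degree-$\le \lceil n/\ell\rceil$ polynomials at a random point, and bound the collision probability by root counting (Schwartz--Zippel). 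Both are one-message-plus-answer protocols with $O(\lambda \log n)$ communication, perfect completeness, and one-sided error at most $n^{-\lambda}$, so either suffices for every use of the lemma in the paper. The trade-off is mild: your field-evaluation fingerprint is self-contained (the error analysis is just a degree bound, and sampling a uniform field element is trivial), whereas the paper's version requires $P_1$ to sample a uniformly random prime --- which needs primality testing and a prime-density argument for the error bound --- but sends a residue modulo a small prime rather than a field element of a structured extension field. One cosmetic remark: in your write-up the soundness holds because the block decomposition is injective, so $m_1 \ne m_2$ indeed forces $P_{m_1} \ne P_{m_2}$ as polynomials; it is worth stating this explicitly, but it is immediate from the construction and is not a gap.
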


\begin{algorithm}[ht]
    \setstretch{1.35}
    \caption{$\Eq_\lambda$, Succinct Equality Test.}
    \label{alg:equality}
    \KwInput{Parties $P_1, P_2$ with inputs $m_1, m_2\in \{0, 1\}^n$.}
    \KwOutput{A flag $f\in \{0, 1\}$.}

    \begin{algorithmic}[1]
    \State $P_1$ samples a prime $p\in[n^\lambda]$ uniformly at random, and sends to $P_2$ the value of $p$ and
    \begin{equation*}
        v_1 \coloneqq m_1 \mod p.
    \end{equation*}
    
    \State $P_2$ outputs and sends $1$ to $P_1$ if and only if $v_1 = m_2 \mod p$. Else, it outputs and sends 0. 

    \end{algorithmic}\label{protocol:Equality}

\end{algorithm}

For a proof of \Cref{lemma:equality}, see \eg \cite{Lovett,Pitassi}.

\subsection{Secure multiparty computation}
\label{subsection:prelim-mpc}

To formally define security, we follow the standard universal composability framework \cite{Canetti01}. We consider a probabilistic polynomial-time (PPT) \emph{environment} $\cZ$ that is invoked on the security parameter $1^\lambda$ and auxiliary input $z$, and conducts the protocol execution in one of two worlds. In the \emph{real world}, $\cZ$ initializes parties $P_1,\dots,P_n$ with inputs $x_1,\dots,x_n$ and initializes an adversary $\Adv$ that corrupts up to $n-h$ of the parties. The remaining honest parties execute a protocol $\Pi$ with the corrupted parties controlled by $\Adv$, who may interact arbitrarily with $\cZ$. At the end of the protocol, all honest parties and the adversary send output to $\cZ$. In the \emph{ideal world}, $\cZ$ initializes ``dummy'' parties $\widetilde{P}_1,\dots,\widetilde{P}_n$ that are supposed to simply forward their inputs to a trusted third party and forward their outputs back to $\cZ$, and initializes an adversary $\Sim$ that corrupts up to $n-h$ of the dummy parties. The remaining honest parties and $\Sim$ execute a protocol with the trusted third party who is implementing an ``ideal functionality'' for $\cF$.

Clearly, in the ideal execution, the simulator does not learn anything besides their output of the function. Thus, if the environment cannot distinguish between the real world execution and the ideal world execution, then the real world adversary must have also learned nothing besides their output of the function.


\paragraph{Real execution} The environment $\cZ(1^\lambda,z)$ initializes parties \linebreak $P_1,\dots,P_n$ with $x_1,\dots,x_n$, and initializes a PPT $\Adv$ corrupting parties $[n] \setminus \cH$. The protocol $\Pi$ is executed in the presence of $\Adv$. The honest parties follow the instructions of the protocol $\Pi$, and the adversary is allowed to deviate from it while communicating with $\cZ$. At the end of the protocol, $\cZ$ receives outputs from the honest parties and output from $\Adv$. Finally, it computes its own output. This output is denoted by the random variable $\Real_{\Pi,\Adv,\cZ}(1^\lambda,z)$.



\paragraph{Ideal execution} We describe the ideal execution for a non-interactive functionality $\cF(x_1,\dots,x_n) \to (y_1,\dots,y_n)$. The environment $\cZ(1^\lambda,z)$ initializes dummy parties $\widetilde{P}_1,\dots,\widetilde{P}_n$ with $x_1,\dots,x_n$, and initializes a PPT $\Sim$ corrupting parties $[n] \setminus \cH$. 

\begin{enumerate}
    \item The honest dummy parties send their inputs $\{x_i\}_{i \in \cH}$ to the trusted third party. The simulator sends a set of inputs $\{x'_i\}_{i \in [n] \setminus \cH}$ on behalf of the corrupted dummy parties to the trusted third party.
    \item The third party computes $\cF$ on this set of inputs to obtain $(y_1,\dots,y_n)$, and returns the outputs to the dummy parties.
    \item The honest dummy parties send their outputs to $\cZ$ and $\Sim$ sends an output to $\cZ$.  
\end{enumerate}

Finally, $\cZ$ computes its own output, which is denoted by the random variable $\Ideal_{\cF,\Sim,\cZ}(1^\lambda,z)$. 

\paragraph{Security with selective abort} Our default notion of security will allow for \emph{selective aborts}: After the corrupted parties receive their output from the trusted third party, but \emph{before} the honest parties have received their output, the adversary / simulator may decide to send a list $\mathcal{L}$ of honest parties to the trusted third party. In this case, the third party will replace each of the honest party outputs $y_i$ for $i \in \mathcal{L}$ with the abort symbol $\bot$.

\paragraph{Ideal execution for interactive functionalities} We will also consider generalizing the notion of a functionality $\cF$ to allow for some limited interaction. In particular, we consider functionalities $\cF = (\cF_1,\cF_2)$ where $\cF_1(x_1,\dots,x_n) \to (y_{1,1},\dots,y_{1,n})$, and $\cF_2$ takes an additional public input $w$ and computes $\cF_2(x_1,\dots,x_n,w) \to (y_{2,1},\dots,y_{2,n})$. That is, the ideal execution for $\cF$ operates as follows. The environment $\cZ(1^\lambda,z)$ initializes dummy parties $\widetilde{P}_1,\dots,\widetilde{P}_n$ with $x_1,\dots,x_n$, and initializes a PPT $\Sim$ corrupting parties $[n] \setminus \cH$. 

\begin{enumerate}
    \item The honest dummy parties send their inputs $\{x_i\}_{i \in \cH}$ to the trusted third party. The simulator sends a set of inputs $\{x'_i\}_{i \in [n] \setminus \cH}$ on behalf of the corrupted dummy parties to the trusted third party.
    \item The third party computes $\cF_1$ on this set of inputs to obtain $(y_{1,1},\dots,y_{1,n})$, and returns the outputs to the dummy parties. 
    \item All outputs are forwarded to $\cZ$, who chooses $w$ and sends it to all dummy parties, who then forward $w$ to the trusted third party.
    \item The third party computes $\cF_2$ on this set of inputs to obtain $(y_{2,1},\dots,y_{2,n})$, and returns the outputs to the dummy parties. 
    \item The honest dummy parties send their outputs to $\cZ$ and the $\Sim$ sends an output to $\cZ$. 
\end{enumerate}

Finally, $\cZ$ computes its own output, which is denoted by the random variable $\Ideal_{\cF,\Sim,\cZ}(1^\lambda,z)$. 
Having defined both the real and ideal executions, we can now formally define a secure protocol.

\begin{definition}[Secure protocols]\label{def:secProto}

A protocol $\Pi$ securely computes a (potentially interactive) functionality $\cF$ in the presence of static malicious adversaries if for any PPT adversary $\Adv$, there exists a PPT simulator $\Sim$ such that for any PPT environment $\cZ$ with auxiliary input $z$,
\begin{equation*}
\left| \mathbb{P}(\Real_{\Pi,\Adv,\cZ}(1^{\lambda},z)=1) -  \mathbb{P}(\Ideal_{\cF,\Sim,\cZ}(1^\lambda,z) =1) \right| < \negl(\lambda),
\end{equation*}
where $\negl(\lambda)$ is asymptotically smaller than any polynomial in $\lambda$.
\end{definition}

That is, \Cref{def:secProto} means that a protocol is secure if the environment cannot distinguish between the real and ideal world executions. This implies that the adversary does not learn anything besides the output.  

\begin{remark}
This definition also captures correctness. For, if we consider an adversary/simulator that corrupts no parties, then the ideal execution is defined to honestly compute the functionality $\cF$ and output the result. Therefore, the real execution must also yield the correct result.
\end{remark}

Next, we define our first very simple ideal functionality, which we refer to as \emph{Simultaneous (All-to-All) Broadcast} $\cF_{\SB}$:
\begin{itemize}
    \item Take input $x_1,\dots,x_n$.
    \item Output $y_1 = (x_1,\dots,x_n),\dots,y_n = (x_1,\dots,x_n)$.
\end{itemize}

\begin{remark}\label{remark:all-to-all}
    If $\ell_{\mathsf{in}} = \max_{i \in [n]}\{|x_i|\}$, then from \Cref{subsection:overview-goldwasser-lindell} and the equality test of \Cref{lemma:equality} we can implement $\cF_{\SB}$ on point-to-point networks using $\tilde{O}(n^2\cdot (\ell_{\mathsf{in}}+ \lambda))$ total bits of communication, up to error $\negl(\lambda)$. 
\end{remark}

\paragraph{Known communication complexity results} Given an interactive functionality $\cF = (\cF_1,\cF_2)$, let $\ell_{\mathsf{in}} = \max_{i \in [n]}\{|x_i|\}$, let $D$ be the maximum circuit depth between $\cF_1$ and $\cF_2$, and let $\ell_{\mathsf{out}}$ be the \emph{total} number of bits of output, that is $\ell_{\mathsf{out}} = |y_{1,1}| + \dots |y_{1,n}| + |y_{2,1}| + \dots + |y_{2,n}|$. Known results \cite{MukherjeeW16,10.1007/978-3-030-26948-7_4} imply the following theorem; for convenience we will also provide a proof sketch.

\begin{theorem}\label{thm:MPC-from-FHE}
    Assuming the hardness of Learning with Errors (LWE), there exists a protocol for securely computing $\cF$ using one invocation of Simultaneous Broadcast $\cF_\SB$ on inputs of size $\poly(\lambda,D,\ell_{\mathsf{in}})$, plus an additional $\ell_{\mathsf{out}} \cdot n \cdot \poly(\lambda,D)$ bits of communication.
\end{theorem}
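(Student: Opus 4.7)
The plan is to instantiate the protocol using a Multi-Key Threshold Fully Homomorphic Encryption (MKTFHE) scheme in the common random string model, following the Mukherjee--Wichs~\cite{MukherjeeW16} blueprint and its robust variants~\cite{10.1007/978-3-030-26948-7_4}. LWE suffices both for the MKTFHE scheme and for the simulation-extractable NIZKs needed to compile it to security against a static malicious adversary in the dishonest majority setting.

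\textbf{Protocol.} Each party $P_i$ samples $(\pk_i, \sk_i)$ from the MKTFHE scheme using the CRS, encrypts its input $\ct_i \leftarrow \TFHEenc_{\pk_i}(x_i)$, and computes a NIZK $\pi_i$ attesting that $(\pk_i, \ct_i)$ is well-formed. The single invocation of $\cF_\SB$ then delivers to every party the tuple $(\pk_i, \ct_i, \pi_i)$ from every other party. After the broadcast, each party verifies all $\pi_j$'s (aborting if any fails) and locally homomorphically evaluates $\cF_1$ under the joint key to produce encrypted outputs $\widehat{y}_{1,j}$ intended for each $P_j$. For each such $\widehat{y}_{1,j}$, every $P_i$ computes a partial decryption share via $\TFHEpartdec$ under $\sk_i$, attaches a NIZK of correctness, and sends it to $P_j$, who combines the received shares via $\TFHEdecf$ to obtain $y_{1,j}$ (or aborts on a failed proof or missing share). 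When $w$ is subsequently delivered by the environment, the parties repeat the evaluate--partial-decrypt--combine cycle for $\cF_2$, using $w$ as a public constant in the evaluated circuit and reusing the \emph{same} ciphertexts $\ct_i$ from before.

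\textbf{Communication accounting.} Each broadcast input $(\pk_i, \ct_i, \pi_i)$ has size $\poly(\lambda, D, \ell_{\mathsf{in}})$, since the evaluation key must support circuits of depth $D$ over $n$-key ciphertexts encrypting inputs of length $\ell_{\mathsf{in}}$, matching the stated broadcast bound. The additional communication consists entirely of the two rounds of partial decryptions: each of the $\ell_{\mathsf{out}}$ output bits (summed across $\cF_1$ and $\cF_2$) requires $n$ shares of size $\poly(\lambda, D)$ each, summing to $\ell_{\mathsf{out}} \cdot n \cdot \poly(\lambda, D)$ bits, as required.

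\textbf{Security and main obstacle.} The simulator $\Sim$ chains the NIZK extractor with the MKTFHE simulator: on seeing the adversary's broadcast tuples $(\pk_i^\ast, \ct_i^\ast, \pi_i^\ast)$, $\Sim$ extracts effective inputs $\{x_i^\ast\}$ from the NIZKs, forwards them to the ideal functionality, and then invokes the MKTFHE simulator to synthesize honest-party partial decryption shares that combine to the received $y_{1,j}$'s (and later $y_{2,j}$'s). The main technical subtlety will be handling the two-phase interactivity: the same ciphertexts are reused across $\cF_1$ and $\cF_2$, so the MKTFHE simulation guarantee must be applied twice against the same underlying key shares. I would address this via a hybrid argument that first replaces the $\cF_1$ shares with simulated ones and then, conditioned on the extracted $\{x_i^\ast\}$ and the public input $w$ revealed by the environment, replaces the $\cF_2$ shares; indistinguishability between adjacent hybrids reduces to LWE-based simulation security of MKTFHE together with zero-knowledge of the NIZKs. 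Correctness in the all-honest case is immediate from the evaluation and decryption correctness of the underlying MKTFHE scheme.
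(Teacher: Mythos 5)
Your proposal is correct and follows essentially the same route as the paper's own proof sketch: the semi-malicious multi-key FHE protocol of \cite{MukherjeeW16} compiled to malicious security with LWE-based NIZKs, one simultaneous broadcast of $(\pk_i,\ct_i,\pi_i)$ of size $\poly(\lambda,D,\ell_{\mathsf{in}})$, and per-output-bit point-to-point partial decryption shares with proofs of correctness, each of size $\poly(\lambda,D)$, giving the $\ell_{\mathsf{out}} \cdot n \cdot \poly(\lambda,D)$ term. Your added detail on the simulator and the hybrid argument for the two-phase functionality $(\cF_1,\cF_2)$ is consistent with, and slightly more explicit than, the paper's sketch.
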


\begin{proof}[Proof sketch]
We will use the semi-malicious protocol of \cite{MukherjeeW16} based on (multi-key) fully-homomorphic encryption, combined with a universally composable non-interactive zero-knowledge  (NIZK) protocol from LWE (\cite{10.1007/978-3-030-26948-7_4,10.1007/978-3-031-22966-4_16}), in order to obtain a fully-malicious protocol.

In some detail, the first round consists of a simultaneous broadcast, where each party broadcasts (i) their public key, (ii) one ciphertext for each of their input bits, and (iii) a NIZK proof that they sampled all of this information honestly. Each of these components grows with the size of the lattice dimension, which is polynomial in the security parameter $\lambda$ and the depth $D$ of (largest) circuit to be computed. Thus, each party broadcasts a message of size at most $\poly(\lambda,D, \ell_{\mathsf{in}})$.

Next, for each bit of output across each of the two functionalities $\cF_1$ and $\cF_2$, we describe the communication necessary to deliver that bit of output to the party $i$ that is supposed to receive it. Each party $j \neq i$ must send to $i$ (in a point-to-point message), (i) a partial decryption corresponding to the output bit, and (ii) a NIZK proof that they computed their partial decryption share honestly. Each partial decryption share is a field element that grows polynomially in the size of the lattice dimension $\poly(\lambda,D)$. Moreover since the entire multi-key fully-homomorphic evaluation can be done publicly by each party, the only part of this computation that must be proven honest by the NIZK is the final (noisy) inner product between the $j$'th ciphertext component and party $j$'s secret key, which is also just the size of the lattice dimension $\poly(\lambda,D)$. Thus, the NIZK proof itself is of size $\poly(\lambda,D)$. So, we conclude that the communication necessary for each output bit is at most $n \cdot \poly(\lambda,D)$.

By combining the simultaneous broadcast step with the partial decryptions, we see that the entire protocol requires one simultaneous broadcast on inputs of size $\poly(\lambda,D,\ell_{\mathsf{in}})$, plus an additional $\ell_{\mathsf{out}} \cdot n \cdot \poly(\lambda,D)$ bits of communication.
\end{proof}

\begin{remark}
    By using \cite{10.1145/3566048} rather than \cite{MukherjeeW16}, we could improve the assumption from LWE to the more general assumption of (maliciously-secure) two-round oblivious transfer, at the cost of replacing the depth parameter $D$ with the \emph{size} $C$ of the largest circuit. That is, assuming maliciously-secure oblivious transfer, there exists a protocol that requires one invocation of simultaneous broadcast on inputs of size $\poly(\lambda, C, \ell_{\mathsf{in}})$, plus an additional $\ell_{\mathsf{out}} \cdot n \cdot \poly(\lambda, C)$ bits of communication.
\end{remark}

\paragraph{Encrypted functionality} We define an interactive ideal functionality that is parameterized by a public-key encryption scheme $\PKE = (\Gen,\Enc,\Dec)$ and a function $f\colon (x_1,\dots,x_m) \to \{0,1\}$ with $m$ inputs. This is, in fact, an input-less but \emph{randomized} functionality. Such randomized functionalities can be implemented by taking a random string $r_i$ as input from each party, and setting the random coins equal to $\bigoplus_{i \in [n]}r_i$. We make this explicit below.

\noindent \underline{$\cF[\PKE,f]$}

\begin{itemize}
    \item $\cF_{\Gen}[\PKE,f](r_1,\dots,r_n)$:
    \begin{itemize}
        \item Take input $r_1,\dots,r_n \in \{0,1\}^\lambda$, define $r \coloneqq \bigoplus_{i \in [n]} r_i$, and compute
        \begin{equation*}
            (\pk,\sk) \coloneqq \Gen(1^\lambda ; r).
        \end{equation*}
        \item Output $\pk$ to every party.
    \end{itemize}
    
    \item $\cF_{\Comp}[\PKE,f](r_1,\dots,r_n,w = (\ct_1,\dots,\ct_m))$:
    \begin{itemize}
        \item Take input $r_1,\dots,r_n \in \{0,1\}^\lambda$, define $r \coloneqq \bigoplus_{i \in [n]} r_i$, and compute
        \begin{equation*}
            (\pk,\sk) \coloneqq \Gen(1^\lambda ; r).
        \end{equation*}
        \item For each $i \in [m]$, compute $x_i \coloneqq \Dec(\sk,\ct_i)$.
        \item Compute $y = f(x_1,\dots,x_m)$ and output $y$ to every party.
    \end{itemize}
\end{itemize}

According to \Cref{thm:MPC-from-FHE} and the simultaneous broadcast protocol of \Cref{remark:all-to-all}, we can securely implement the above functionality with communication complexity $n^2 \cdot \poly(\lambda,D) + m \cdot n \cdot \poly(\lambda,D)$, where $D$ is the circuit depth of $f$.

\section{Communication-efficient protocols}\label{sec:upper}

In this section we prove our main result on the communication complexity of MPC with abort.

\begin{theorem}[Restatement of \Cref{theorem:results_new_protocol}]\label{theorem:formal_new_protocol} Assuming the hardness of LWE, there exists a protocol for MPC with abort against static malicious adversaries computing functions $f$ of depth $D$ using $O(n^2\cdot h^{-1}\poly(\lambda, D, \log n))$ bits of communication. The protocol has error $\negl(\lambda)$.
\end{theorem}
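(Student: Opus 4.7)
The plan is to instantiate the committee-based protocol sketched in \Cref{subsection:overview-optimal_communication}, and to carefully account for each step's cost so that the total communication is $\tilde{O}(n^2 h^{-1} \poly(\lambda,D))$. Every party flips an independent biased coin, joining the committee $C$ with probability $p = \Theta(\lambda \log(n)/h)$; a Chernoff/hitting-set argument then gives $|C| = \tilde{O}(n/h)$ and $|C \cap \cH| \geq 1$ except with probability $\negl(\lambda)$. Each member of $C$ then announces its membership to all $n$ parties, and each party records the set $V_i$ of claimed committee members it received. Consistency is enforced using \Cref{lemma:equality}: every party runs $\Eq_\lambda$ with each committee member on $V_i$, and committee members also run $\Eq_\lambda$ pairwise; any honest party detecting disagreement (or receiving more bits than the prescribed schedule allows) warns its counterparts and aborts. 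After this phase, either all honest parties abort or they share a common view of $C$.

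The committee next runs the encrypted functionality of \Cref{subsection:prelim-mpc} to generate a public key $\pk$ (with the secret key secret-shared among $C$), distributes $\pk$ to the network (again verified with $\Eq_\lambda$ hashes), and receives a ciphertext $\ct_i$ from every party. The critical design choice at this point is that \emph{we do not} broadcast the ciphertexts themselves among $C$; instead, each committee member hashes its entire received vector of $n$ ciphertexts and the committee performs an all-to-all broadcast with abort on just those $O(\lambda \log n)$-bit hashes. The committee then invokes $\cF_{\Comp}[\PKE, f]$ to obtain $y = f(x_1,\dots,x_n)$, and each committee member sends $y$ to all $n$ parties, with a final $\Eq_\lambda$ check catching any equivocation by corrupt committee members.

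Correctness and the communication accounting are then routine. The encrypted functionality run among $|C|$ parties costs $|C|^2 \poly(\lambda,D) + n|C|\poly(\lambda,D) = \tilde{O}(n^2 h^{-1} \poly(\lambda,D))$ by the bound at the end of \Cref{subsection:prelim-mpc}. Each of the point-to-point phases (announcement, $\pk$ distribution, ciphertext submission, output distribution) involves $\tilde{O}(n|C|) = \tilde{O}(n^2 h^{-1})$ exchanges of messages of size $\poly(\lambda,D)$, and each verification phase contributes only $\tilde{O}(n|C| \cdot \lambda)$ bits via succinct equality testing. The hash-based ciphertext consistency step contributes $\tilde{O}(|C|^2 \lambda) = \tilde{O}(n^2 h^{-2})$, dominated by the other terms. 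Summing all phases gives the claimed bound.

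For UC security, the simulator samples the committee honestly (so it knows $C \cap (\cH^c)$), runs the commitment/announcement/verification phases verbatim with the adversary $\Adv$, and then emulates the committee's internal MPC using the simulator guaranteed by \Cref{thm:MPC-from-FHE} for $\cF[\PKE,f]$. Because at least one party in $C$ is honest except with probability $\negl(\lambda)$, the secret-shared key is hidden from $\Adv$, so in a hybrid step the honest inputs' ciphertexts can be replaced by encryptions of zero, allowing the simulator to instead submit the corrupt parties' extracted inputs to the ideal functionality and obtain $y$. I expect the main obstacle to be handling adversarial equivocation cleanly across hybrids: the adversary can send inconsistent announcements, ciphertexts, or outputs to different honest parties, and we must argue that the resulting pattern of aborts in the real world is simulatable. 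This is resolved by appealing to the $1 - n^{-\lambda}$ soundness of $\Eq_\lambda$ (\Cref{lemma:equality}) to argue that any inconsistency is detected except with negligible probability, so the honest parties either all abort or all proceed on a shared view; the simulator deterministically replays the verification transcripts against $\Adv$ and commits the resulting abort list to the trusted party. Composing these hybrids yields distinguishing advantage $\negl(\lambda)$, completing the proof.
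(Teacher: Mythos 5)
Your proposal is correct and follows essentially the same route as the paper: a self-elected committee of size $\Tilde{O}(n/h)$ with a hitting-set argument guaranteeing an honest member, consistency enforced by the succinct equality test of \Cref{lemma:equality} (including hashing the concatenated ciphertext vectors rather than exchanging them), delegation of the computation to the committee via the encrypted functionality of \Cref{thm:MPC-from-FHE}, and the same communication accounting and reduction of security/correctness to the committee election, equality tests, and the encrypted functionality. The only deviations are cosmetic (an extra network-to-committee equality check on the committee view, and a more explicit simulator sketch than the paper's brief argument), neither of which changes the approach or the bound.
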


We refer the reader to \Cref{subsection:overview-optimal_communication} for an overview. In this section, we begin with a committee election protocol in \Cref{subsection:committee-election}, and then proceed with our MPC protocol in \Cref{subsection:mpc_with_aborts}.

\subsection{Committee election protocol}
\label{subsection:committee-election}

The overarching principle behind the committee election protocol is that by its conclusion, the parties agree on a small subset of the network which contains at least an honest party. We refer the reader to \Cref{alg:commitee_elect} for a description. 

\begin{algorithm}[ht]
    \setstretch{1.35}
    \caption{$\Com$, Committee Election Protocol.}
    \label{alg:commitee_elect}
    \KwInput{An integer $h\in [n]$, and integer security parameters $\alpha, \lambda$.}
    \KwOutput{Each party $i\in [n]$ receives a subset $C_i\subset [n]$, or aborts $\bot$.}

    \begin{algorithmic}[1]

    \State Each party $i\in [n]$ samples a bernoulli random variable with probability $p = \min(1, \alpha \cdot \frac{\log n}{h})$, resulting in a bit $b_i$.
    
    \State If $b_i=1$, then $i$ sends $b_i$ to all other $j\in [n]$.

    \State Let $C_i\subset [n]$ denote the bits $b_j=1$ received by $i\in [n]$ for $j\neq i$. If $|C_i|\geq 2\cdot p\cdot n$, abort.

    \State Each $i, j\in [n]$ s.t. $j\in C_i, i\in C_j$ engage in $\Eq_\lambda(C_i, C_j)$.
    
    \State If no pair rejects, then party $i$ ``receives'' $C_i$. Otherwise, abort.

    \end{algorithmic}
\label{proto:ComitElect}
\end{algorithm}

Step 2 (``election notification'') in \Cref{alg:commitee_elect} ensures that only the committee members notify the rest of the network of their election. Naturally, some malicious parties may attempt to lie about their election results or equivocate during the execution. However, step 3 ensures the total number of liars is bounded, and step 4 ensures that the committee members at least have a consistent view of each other, even in the presence of equivocation.

\begin{claim}\label{claim:committee-communication}
    The communication complexity of \Cref{alg:commitee_elect} is $\Tilde{O}(n^2\cdot h^{-1} \poly(\alpha, \lambda))$.
\end{claim}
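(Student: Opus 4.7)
The plan is to bound the communication incurred by each step of \Cref{alg:commitee_elect} separately and then sum. Steps 1, 3, and 5 are purely local (a Bernoulli sample, a size check on $C_i$, and an accept/abort decision on the equality outputs), so all communication comes from Steps 2 and 4.

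First I would bound Step 2. Let $E \coloneqq \sum_i b_i$ denote the number of self-elected parties; since each such party sends a single bit to the $n-1$ others, Step 2 costs exactly $E(n-1)$ bits. With all parties honest, $E \sim \mathrm{Bin}(n,p)$ for $p = \min(1,\alpha \log n/h)$, and a Chernoff bound gives $\Pr[E > 2pn] \leq \exp(-\Omega(pn)) \leq \exp(-\Omega(\alpha \log n))$, which is $\negl(\lambda)$ once $\alpha$ is taken to be $\Omega(\lambda)$ (when $p=1$ the bound $E \leq n \leq 2pn$ holds deterministically). Hence, except with negligible probability, Step 2 contributes $2pn \cdot (n-1) = O(\alpha n^2 \log n/h)$ bits.

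Next, for Step 4, the Step 3 check enforces $|C_i| < 2pn$ for every party that does not abort, so at most $\binom{2pn}{2} = O(p^2 n^2)$ ordered pairs invoke $\Eq_\lambda$. Encoding each $C_i$ as an $n$-bit indicator string, \Cref{lemma:equality} bounds the cost of a single invocation by $O(\lambda \log n)$ bits, so Step 4 communicates $O(p^2 n^2 \lambda \log n) = O(\alpha^2 \lambda n^2 \log^3 n / h^2)$ bits; using $h \geq 1$ this is at most $O(\alpha^2 \lambda n^2 \log^3 n / h)$. Summing the two contributions yields the claimed bound $\tilde{O}(n^2 h^{-1} \poly(\alpha,\lambda))$.

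The only subtle point, and the one I expect to be the main obstacle to writing this cleanly, is reconciling the Chernoff tail for $E$ with the paper's worst-case convention for communication complexity. The resolution, and the very reason Step 3 is built into the protocol, is that on the $\negl(\lambda)$ event $E > 2pn$ every honest party aborts before reaching Step 4, so the ``all-honest'' execution either stays inside the advertised communication bound or aborts with negligible probability; everything else is routine counting.
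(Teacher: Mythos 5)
Your proposal is correct and follows essentially the same route as the paper's proof: all communication is charged to Steps 2 and 4, the committee size is bounded by $O(pn)=\tilde{O}(\alpha n/h)$, and each pairwise $\Eq_\lambda$ invocation costs $O(\lambda\log n)$ bits by \Cref{lemma:equality}, giving $\tilde{O}(n^2 h^{-1}\poly(\alpha,\lambda))$ in total. The only difference is that you make explicit (via the Chernoff tail on the number of self-elected parties and the Step-3 abort) the committee-size bound that the paper simply asserts as $|C|\leq \alpha n h^{-1}$, which is a slightly more careful treatment of the same argument.
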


\begin{proof}
    In an honest execution of the protocol, one readily inspects the total amount of bits sent within the committee is \linebreak $O(|C|^2\lambda \log n)$ and $O(|C|\cdot n)$ from the committee to the network. The upper bound on the committee size $|C|\leq n h^{-1} \alpha$ implies the claim.
\end{proof}

Correctness is based on the following simple ``Hitting Set'' Lemma, which follows immediately from a Chernoff bound.

\begin{lemma}\label{lemma:hitting_independent}
    Fix $H\subset [n]$. Let $R\subset [n]$ be a random subset, defined by independently sampling each element of $[n]$ with probability $p$. Then, $|H\cap R| \geq p\cdot |H|/2$ with probability $\geq 1-2^{-\Omega(p\cdot |H|)}$.
\end{lemma}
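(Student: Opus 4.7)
The plan is to give a direct application of the multiplicative Chernoff bound, as indicated by the authors. Fix $H \subset [n]$. For each $i \in H$, let $X_i \in \{0,1\}$ be the indicator random variable for the event $i \in R$. By the sampling rule defining $R$, the $X_i$ are mutually independent Bernoulli random variables with $\Pr[X_i = 1] = p$. Let $X \coloneqq \sum_{i \in H} X_i$; then $X = |H \cap R|$, and $\mu \coloneqq \mathbb{E}[X] = p \cdot |H|$.

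I would then invoke the standard multiplicative lower-tail Chernoff bound, which states that for any $\delta \in (0,1)$,
\begin{equation*}
    \Pr\bigl[X \leq (1-\delta)\mu\bigr] \leq \exp\!\bigl(-\delta^2 \mu / 2\bigr).
\end{equation*}
Instantiating this with $\delta = 1/2$ yields
\begin{equation*}
    \Pr\bigl[|H \cap R| < p \cdot |H|/2\bigr] \leq \exp\!\bigl(-p \cdot |H|/8\bigr) = 2^{-\Omega(p \cdot |H|)},
\end{equation*}
and the complementary event is exactly the conclusion of the lemma.

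There is no serious obstacle here: the statement is a textbook consequence of Chernoff, and the only (very minor) care needed is to note that the $X_i$ are genuinely independent (which holds by the definition of $R$ as independent coordinate-wise sampling), and to absorb the constant $1/8$ in the exponent into the $\Omega(\cdot)$ notation. If one preferred a self-contained argument rather than citing Chernoff as a black box, one could derive the bound via the standard exponential-moment / Markov inequality on $e^{-tX}$, but this is unnecessary given the elementary nature of the claim.
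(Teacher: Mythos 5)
Your proof is correct and matches the paper's intended argument: the paper explicitly notes that the lemma ``follows immediately from a Chernoff bound,'' which is exactly the multiplicative lower-tail Chernoff application (with $\delta = 1/2$ and independent indicators $X_i$ for $i \in H$) that you give. No gap and no difference in approach.
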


\begin{claim}
    After \Cref{alg:commitee_elect}, with probability $\geq 1-n^{-\Omega(\min(\alpha, \lambda))}$, either a party has aborted or 
    \begin{enumerate}
        \item At least 1 honest party $i$ was sampled in step 1, \textit{and}
        \item All the committee members agree on their view with $i$, $C_i\equiv C$.
    \end{enumerate}

   In the absence of any malicious parties, the protocol aborts with probability $\leq n^{2-\alpha}$.
    \end{claim}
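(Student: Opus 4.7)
The plan is to decompose the desired event into the two sub-conditions and bound the failure probability of each by a union bound. Let $R \subseteq [n]$ denote the (random) set of parties who sampled $b_i = 1$ in step 1, and let $\cH$ be the set of at least $h$ honest parties.

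First, I would establish condition (1) by a direct application of the Hitting Set Lemma (\Cref{lemma:hitting_independent}) with $H = \cH$ and sampling probability $p = \alpha \log n / h$. The mean of $|R \cap \cH|$ is at least $ph = \alpha \log n$, so $|R \cap \cH| \geq ph/2 \geq \alpha(\log n)/2 \geq 1$ with probability at least $1 - 2^{-\Omega(ph)} = 1 - n^{-\Omega(\alpha)}$; I then fix any such honest $i \in R \cap \cH$ as the candidate of the claim.

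Next, I would establish condition (2). The key structural observation is that any two honest parties in $R \cap \cH$ faithfully broadcast their bit in step 2, so they each lie in the other's $C$-set and therefore engage in $\Eq_\lambda$ in step 4. By \Cref{lemma:equality}, any fixed honest pair with differing views would reject (forcing an abort in step 5) with probability $\geq 1 - n^{-\lambda}$. Conditioning on no abort in step 3 bounds the committee size by $2pn$, so there are at most $(2pn)^2 = O(n^2)$ such pairs to union-bound over, giving a combined false-positive error of $O(n^{2-\lambda}) = n^{-\Omega(\lambda)}$. Summing the two failure events yields the claimed bound $n^{-\Omega(\min(\alpha, \lambda))}$. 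For the all-honest case, \Cref{lemma:equality} never rejects equal inputs, so the only source of abort is step 3; for each $i$, $|C_i|$ is a sum of at most $n-1$ independent Bernoulli($p$) variables with mean at most $pn$, so a multiplicative Chernoff bound yields $\Pr[|C_i| \geq 2pn] \leq e^{-\Omega(pn)} = n^{-\Omega(\alpha)}$ (using $pn \geq \alpha \log n$), and a union bound over the $n$ parties gives the advertised abort probability.

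The main subtlety I anticipate is precisely in the conditioning for condition (2): before invoking the equality-test bound, the argument must use step 3's abort threshold to bound the committee size (and hence the number of pairs in the union bound) by $\poly(n)$, since otherwise a malicious adversary could inflate some $|C_i|$ by flooding step 2 and invalidate the union bound. A secondary subtlety is that we must only compare the \emph{honest-honest} pairs, since malicious parties' claimed views are unconstrained; fortunately this is exactly the pair structure enforced by step 4. The rest reduces to a direct application of Hitting Set + Chernoff + union bound.
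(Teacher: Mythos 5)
Your proposal is correct and follows essentially the same route as the paper: item (1) from the hitting-set lemma (\Cref{lemma:hitting_independent}), item (2) from the soundness of $\Eq_\lambda$ (\Cref{lemma:equality}) with a union bound over at most $n^2$ committee pairs, plus a Chernoff-and-union-bound argument for the all-honest abort probability, which the paper's proof leaves implicit. One small remark: the ``subtlety'' you flag about needing step 3 to control the union bound is unnecessary, since $C_i \subseteq [n]$ forces at most $n^2$ pairwise tests regardless of adversarial flooding (this is exactly the paper's bound $|C|^2 \leq n^2$).
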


        \begin{proof}
        Item 1 follows verbatim from \Cref{lemma:hitting_independent}. Item 2 follows from the security of the equality test \Cref{lemma:equality} and a union bound over all $\leq |C|^2\leq n^2$ tests. 
    \end{proof}

\subsection{MPC with abort}
\label{subsection:mpc_with_aborts}

We can now describe our protocol for MPC, \cref{alg:mpc_w_aborts}. As previously discussed, it is based on delegating the computation to a committee via the MPC functionality of \Cref{thm:MPC-from-FHE}. Succinct equality checks are run within the committee to ensure they all receive consistent views of the other parties inputs in the network.

\begin{algorithm}[ht]
    \setstretch{1.35}
    \caption{Multi-party Computation with abort.}
    \label{alg:mpc_w_aborts}
    \KwInput{Integer $h\in [n]$, a function $f \colon (\{0, 1\}^\ell)^n\rightarrow \{0, 1\}^{\ell'}$ and an input $x_i\in \{0, 1\}^\ell:$ $\forall i\in [n]$.}
    \KwOutput{Each party $i\in [n]$ outputs $f(x_1, \cdots x_n)$, or a party aborts $\bot$.}

    \begin{algorithmic}[1]

    \State Execute $\Com$, resulting in a committee $C\subset [n]$, and local views $C_i\subset [n]$ for $i\in [n]$.
    
    \State The committee generates $(pk, sk_c)_{c\in C}$ pairs using the encrypted functionality $\cF_\Gen$.

    \State Each party $c\in C$ forwards the public key $pk$ to all other $i\in [n]$. If any two messages differ, output $\bot$.

    \State Parties $i\in [n]$ encode their input $\ct_i = \Enc_{pk}(x_i)$, and send them to all $c\in C_i$.

    \State The committee members $c\in C$ concatenate their received messages $m_c = \{\ct_i\}_{i\in [n]}$, and pairwise run $\Eq_\lambda(m_{c'}, m_c)$. If the inputs are not all consistent, they abort $\bot$.

    \State The committee members engage in the encrypted functionality $\cF_\Comp$ with public inputs $\{\ct_i\}_{i\in [n]}$ and private inputs $\{sk_c\}_{c\in C}$, to compute the output $\out = f(x_1, \cdots, x_n)$.

    \State Finally, each committee member forwards $\out$ to all the members of the network. 

    \end{algorithmic}

\end{algorithm}

\Cref{theorem:formal_new_protocol} follows from the following \Cref{claim:communication-mpc,claim:correctness-mpc,claim:security-mpc}.
\begin{claim}\label{claim:communication-mpc}
The communication complexity of \Cref{alg:mpc_w_aborts} is 
    \begin{equation*}
        O(n^2h^{-1}\poly(\lambda, D, \log n)).
    \end{equation*}
\end{claim}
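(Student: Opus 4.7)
The plan is to bound the communication of each of the seven steps of \Cref{alg:mpc_w_aborts} in turn and then sum. The key observation throughout is that an honest party aborts in step~3 of $\Com$ whenever its local view exceeds the threshold $2pn$; hence, in any non-aborting execution, both the committee and every party's local view $C_i$ satisfy $|C|, |C_i| = O(nh^{-1}\log n)$, and we may charge any sum over ``committee members'' against this bound. We will also repeatedly use $|C|^2 \leq n\cdot |C|$ (since $|C|\leq n$) to collapse $|C|^2$ contributions into the target rate.

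Step~1 is handled directly by \Cref{claim:committee-communication}, contributing $\Otilde(n^2 h^{-1}\poly(\lambda))$ bits. Steps~2 and~6 implement $\cF_\Gen$ and $\cF_\Comp$ inside the committee; by \Cref{thm:MPC-from-FHE}, each costs $|C|^2 \cdot \poly(\lambda, D) + \ell_{\mathsf{out}} \cdot |C|\cdot \poly(\lambda, D)$ bits, which in our setting reduces to $|C|^2\cdot \poly(\lambda, D)$ once one absorbs the $\poly(\lambda, D)$-sized public key output (for step~2) and the constant-sized output $\out$ (for step~6). Steps~3 and~7 consist of committee members forwarding, respectively, the public key and the output $\out$ to each party, for a total of $|C|\cdot n \cdot \poly(\lambda, D)$ bits per step. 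Step~4 has each party send one $\poly(\lambda, D)$-sized ciphertext to the members of its local view, contributing another $n\cdot |C|\cdot \poly(\lambda, D)$ bits. Finally, step~5 runs $\binom{|C|}{2}$ instances of $\Eq_\lambda$ on strings of length $n\cdot \poly(\lambda, D)$, contributing $|C|^2\cdot \poly(\lambda, \log n)$ bits by \Cref{lemma:equality}.

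Summing these bounds and collapsing each $|C|^2$ term into $n\cdot |C|$, the total is $O(n\cdot |C|\cdot \poly(\lambda, D, \log n)) = O(n^2 h^{-1}\poly(\lambda, D, \log n))$. The one piece of bookkeeping to get right is step~6: the public input $w = \{\ct_i\}_{i\in [n]}$ fed into $\cF_\Comp$ has size $n\cdot \poly(\lambda, D)$, but it is already held locally by every committee member after step~4 and verified consistent in step~5, so it does \emph{not} need to be broadcast inside the committee MPC. Only the $\lambda$-bit per-party randomness counts towards $\ell_{\mathsf{in}}$ when invoking \Cref{thm:MPC-from-FHE}; otherwise, step~6 would blow up to $\tilde\Omega(n\cdot |C|^2)$ and the claimed bound would fail. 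Once this subtlety is handled, the rest of the proof is a straightforward sum.
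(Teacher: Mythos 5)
Your proof is correct and follows essentially the same route as the paper's: bound the committee size by the election threshold, charge the network--committee traffic and pairwise equality tests at $O(n|C|\cdot b + |C|^2\cdot\poly(\lambda,\log n))$ with $b \le \poly(\lambda,D)$, invoke \Cref{thm:MPC-from-FHE} for the encrypted functionality, and conclude via $|C| \le \tilde{O}(\lambda n h^{-1})$. The bookkeeping point you flag---that the ciphertexts $\{\ct_i\}_{i\in[n]}$ enter only as an already-distributed public input and are not re-broadcast inside the committee MPC---is handled implicitly in the paper's accounting (its $O(|C|\cdot n\cdot \poly(\lambda,D))$ term for the encrypted functionality), so your treatment is just a more explicit version of the same argument.
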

\begin{proof}
    Suppose the bitlength of the public keys, secret key shares, and ciphertexts $\ct$ is bounded by $b$. Then, the total amount of communication between network and committee, and within the committee, during steps 1 and 3, 4, 5 is $O(b\cdot (n\cdot |C| + \lambda|C|^2\log (n\cdot b))) \leq O(b n^2 h^{-1}\poly(\log n, \log b, \lambda))$ from the guarantees of \linebreak $\Com$ (\Cref{claim:committee-communication}) and $\Eq_\lambda$ (\Cref{lemma:equality}).

    In turn, from \Cref{thm:MPC-from-FHE}, the communication complexity of the encrypted functionality in steps 3 and 5 is $O(|C|\cdot n \cdot \poly(\lambda, D))$, and $b\leq \poly(\lambda, D)$. Under the bound $|C|\leq \Tilde{O}(\alpha nh^{-1})$ for the committee size, and the choice $\alpha=\lambda$ we conclude the desired claim.
\end{proof}

\begin{claim}\label{claim:correctness-mpc}
At the end of \Cref{alg:mpc_w_aborts}, with probability all but $n^{-\Omega(\lambda)}$, the $n$ parties have either all received the correct output or one of them has aborted. Moreover, if they are all honest, they abort with probability at most $n^{-\Omega(\lambda)}$.
\end{claim}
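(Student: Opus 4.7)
The plan is to establish both parts of the claim by a union bound over the possible failure modes of each step in \Cref{alg:mpc_w_aborts}, following the flow of the protocol.

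First, I would invoke the previous claim on $\Com$ with $\alpha = \lambda$: except with probability $n^{-\Omega(\lambda)}$, either some party has already aborted (in which case the statement holds vacuously) or the elected committee $C$ contains at least one honest member $c^*$ and every committee member $c \in C$ agrees on $C_c \equiv C$. Conditioning on this good event, I would then walk through the remaining steps. In step 3, $c^*$ forwards a single public key $pk$ to every party, so any honest party that receives an inconsistent key from some other (necessarily malicious) committee member aborts by construction; hence either some honest party has aborted, or all honest parties share the same $pk$ produced by $\cF_\Gen$.

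Next, in steps 4--5, I would show that equivocation is detected. A malicious party (in $C$ or not) could try to send distinct ciphertexts to distinct committee members, but since every pair of committee members runs $\Eq_\lambda$ on the concatenated vector $m_c = \{\ct_i\}_{i\in [n]}$, any discrepancy between two honest members' lists is caught except with probability $n^{-\lambda}$ per pair, by \Cref{lemma:equality}. A union bound over at most $|C|^2 = \Tilde{O}(n^2 h^{-2})$ pairs keeps the failure probability at $n^{-\Omega(\lambda)}$. Conditioned on this, all honest committee members hold the same ciphertext vector, which includes the (correctly encrypted) inputs of all honest parties. By the security of $\cF_\Comp$ (\Cref{thm:MPC-from-FHE}), provided at least one committee member is honest so that the shared secret key remains well-defined, the encrypted functionality returns the correct output $\out = f(x_1, \dots, x_n)$ to every honest committee member. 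In step 7, the honest $c^*$ forwards this value to every party; as in step 3, any honest party receiving conflicting values aborts, and otherwise outputs the correct $\out$.

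For the all-honest case, no equivocation occurs, no $pk$ or $\out$ inconsistency arises, and the only sources of failure are the intrinsic abort probability of $\Com$ (at most $n^{2-\lambda}$ by the previous claim) and a collision in $\Eq_\lambda$ on equal inputs, which is zero (the test always returns $1$ when $m_c = m_{c'}$); combining these yields abort probability at most $n^{-\Omega(\lambda)}$, as required. The main obstacle is bookkeeping: one must carefully enumerate the equivocation points (steps 3, 5, and 7) and verify that the corresponding consistency checks are indeed triggered by every honest party when needed, so that the union bound collects only polynomially many events each of probability $n^{-\Omega(\lambda)}$.
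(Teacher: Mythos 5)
Your proposal is correct and follows essentially the same route as the paper's proof: condition on the success of $\Com$ and the $\Eq_\lambda$ checks (union-bounding their $n^{-\Omega(\lambda)}$ failure probabilities over polynomially many invocations), then inherit correctness from the secure implementation of the encrypted functionality, with the all-honest case handled by the intrinsic abort probability of $\Com$ and the fact that $\Eq_\lambda$ never rejects equal inputs. You simply spell out the per-step bookkeeping (consistency of $pk$, the pairwise ciphertext checks, and output forwarding via an honest committee member) that the paper's proof leaves implicit.
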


\begin{proof}
    The second part of the claim above---non-triviality---follows from the fact that $\Com$ and $\Eq$ individually fail with probability $n^{-\Omega(\lambda)}$, with honest or dishonest inputs. A union bound over the desired equality checks then gives us the claim. 

    In turn, conditioning on the success of $\Com$ and $\Eq$ as above, the resulting protocol is correct if the encrypted functionality $(\mathcal{F}_\Enc, \mathcal{F}_\Comp)$ is correct. Indeed, by \Cref{def:secProto} we prove the claim. 
\end{proof}

\begin{claim}\label{claim:security-mpc}
Assuming the hardness of learning with errors, \Cref{alg:mpc_w_aborts} is secure. 
\end{claim}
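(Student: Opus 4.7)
The plan is a standard simulation-based proof carried out in the $\cF[\PKE,f]$-hybrid model: construct a PPT simulator $\Sim$ which, given only the output of the ideal functionality for $f$ on the honest parties' true inputs, produces a joint view of $\Adv$ and the honest parties computationally indistinguishable from the real execution. Universal composition together with \Cref{thm:MPC-from-FHE} then upgrades the claim to the fully-instantiated LWE-based setting.

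The simulator $\Sim$ runs $\Com$ honestly on behalf of the honest parties. By the correctness guarantee of \Cref{alg:commitee_elect}, with probability $1-n^{-\Omega(\lambda)}$ at least one honest committee member $c^*$ exists and all honest committee members share a common view of $C$; we condition on this event and fold the negligible error into the final bound. When $\cF_\Gen$ is invoked, $\Sim$ samples $(\pk,\sk) \gets \Gen(1^\lambda)$ itself, delivers $\pk$ to every party, and retains $\sk$. It simulates step 3 honestly, aborting honest parties as needed. In step 4, on behalf of each honest party $i$, $\Sim$ sends $\ct_i \gets \Enc_{\pk}(0^\ell)$; it also decrypts each corrupt-party ciphertext $\ct_j$ under $\sk$ to extract an effective input $x'_j$ and queries the ideal functionality for $f$ on the extracted inputs (together with the true honest inputs) to obtain $y$. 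The call to $\cF_\Comp$ is simulated to return $y$, and $\Sim$ forwards $y$ on behalf of honest committee members in step 7. Finally, $\Sim$ instructs the trusted party to abort exactly those honest parties whose internally simulated execution terminates with $\bot$.

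Indistinguishability is then established by two hybrids. In Hybrid 1 we replace each honest party's ciphertext $\Enc_{\pk}(x_i)$ with $\Enc_{\pk}(0^\ell)$. The only appearance of $\sk$ in the entire protocol is inside the ideal $\cF_\Comp$ call, and since the honest committee member $c^*$ never leaks its share of $\sk$, indistinguishability reduces directly to the CPA-security of $\PKE$, which is implied by LWE. In Hybrid 2 we reprogram the output of the simulated $\cF_\Comp$ from its honest specification $f(\Dec(\sk,\ct_1),\dots,\Dec(\sk,\ct_n))$ to $f$ evaluated on the true honest inputs together with the extracted corrupt inputs; correctness of $\PKE$ makes these distributions identical up to negligible decryption error, and Hybrid 2 coincides with $\Ideal_{\cF,\Sim,\cZ}$.

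The main obstacle is faithfully simulating the \emph{abort pattern}: honest parties can abort at several distinct points---inside $\Com$, on inconsistent keys in step 3, on a failed $\Eq_\lambda$ test in step 5, or on inconsistent outputs in step 7---and the adversary may cause different subsets of honest parties to abort at each. $\Sim$ handles this by running the entire simulated protocol to completion internally and then relaying exactly the set of honest parties whose simulated view ends in $\bot$ as the selective-abort list $\mathcal{L}$. A secondary subtlety is scheduling: $\Sim$ must commit to the ideal query for $f$ before simulating $\cF_\Comp$ so as to program its output, which is valid because the corrupt parties' step-4 ciphertexts are fixed by that point and the extraction via $\sk$ is deterministic.
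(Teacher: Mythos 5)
Your overall route is the same as the paper's: condition on $\Com$ and the $\Eq_\lambda$ checks succeeding (so the committee contains an honest member with a consistent view), and then inherit security from the encrypted functionality $\cF[\PKE,f]$ as realized by \cref{thm:MPC-from-FHE}. You go further than the paper by actually writing down the simulator, the abort-pattern handling, and a hybrid argument, which is the right instinct; the committee-election conditioning, the programming of $\cF_{\Comp}$, and the selective-abort bookkeeping are all fine.

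The genuine gap is in your Hybrid~1 step, where you claim the switch from $\Enc_{\pk}(x_i)$ to $\Enc_{\pk}(0^\ell)$ ``reduces directly to the CPA-security of $\PKE$.'' Your own simulator uses $\sk$ to decrypt the corrupted parties' ciphertexts (to extract $x'_j$ and to program the output of $\cF_{\Comp}$), so the hybrid experiment is not computable without $\sk$, and a CPA reduction has neither $\sk$ nor a decryption oracle; extraction-by-decryption and CPA-indistinguishability of the honest ciphertexts cannot live in the same reduction without extra machinery. The issue is not merely formal: a rushing adversary that controls even one committee member sees an honest $\ct_i$ in step~4 and can resubmit it verbatim as corrupt party $j$'s ciphertext. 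In the real world the output is then $f(\dots,x_i,\dots,x_i,\dots)$, whereas your simulator (whose honest ciphertexts encrypt $0^\ell$) extracts $x'_j=0^\ell$ and queries the ideal functionality on $f(\dots,x_i,\dots,0^\ell,\dots)$, so the environment distinguishes. Closing this requires something beyond plain CPA security---e.g.\ non-malleable/CCA-style encryption with party labels, NIZK proofs of knowledge attached to the input ciphertexts (the CRS assumed by the paper would support this), or an explicit rule for duplicated ciphertexts. The paper's own two-line proof never opens the PKE layer and simply invokes the security of $\cF[\PKE,f]$ after conditioning on $\Com$ and $\Eq$, so it leaves this point implicit; your more explicit argument is where the difficulty surfaces, and as written that step does not go through.
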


\begin{proof}
The proof of security is similarly inherited from the security of the encrypted functionality $\mathcal{F}$ of \Cref{def:secProto}. Indeed, an adversary which breaks the security of \Cref{alg:mpc_w_aborts} can be turned into an adversary for $\mathcal{F}$ by conditioning on the success of $\Com$ and $\Eq$---which as discussed occurs with probability $1-\negl(\lambda)$.
\end{proof}

\subsection{Generalization to multi-output functionalities}
Our protocol can be generalized to the setting where $n$ parties desire to compute a function that maps $n$ inputs to $n$ constant sized outputs (one output for each party).  
The main modifications in the MPC with abort protocol are the following:
\begin{enumerate}
    \item Each party encrypts their input as well as a randomly sampled secret key which will be used to encrypt their output. This is because each party should only learn its output. 
    \item The output length is now of order $n$ bits, so if every party in the committee forwards it to every other party in the network, the communication complexity in this setting would be $\mathcal{O}(\frac{n^3}{h^2})$. In order to avoid this blow-up in communication, each party's output is \emph{signed} using a digital signature after being encrypted. This is why we need 2 pairs of public and secret keys in our modified encrypted functionality (see below)---one pair is used to encrypt and decrypt the parties' inputs, the other pair is used to sign and verify signatures on the parties' outputs. From the security of digital signature schemes, the probability of forging a signature on a modified output is negligible. Assuming this security, it is enough to have any one party (even adversarial) forward the output to all the members in the network. 
\end{enumerate}

Before we formally state the protocol, we will define a modified  encrypt then authenticate functionality that is parameterized by a public-key encryption scheme $\PKE = (\Genenc,\Enc,\Dec)$, a digital signature scheme $\DS= (\Gensig,\Sign,\Vrfy)$, a secret key encryption scheme $\SKE = (\Genenc',\Enc',\Dec')$ and a function  $f \colon (x_1,\dots,x_m) 
\to (\{0,1\}^{l'})^{m}$ with $m$ inputs and $m$ constant sized outputs. This functionality is implemented by taking randomness and inputs from each party. 

\noindent \underline{$\cF_\Enc[\PKE,\SKE,\DS,f]$}
\begin{itemize}
    \item $\cF_{\Gen,1}[\PKE,\SKE,\DS,f](r_1,\dots,r_n)$:
    \begin{itemize}
        \item Take input $r_1,\dots,r_n \in \{0,1\}^\lambda$, define $r \coloneqq \bigoplus_{i \in [n]} r_i$, and compute
        \begin{equation*}
            (\pk,\sk) \coloneqq \Genenc(1^\lambda ; r).
        \end{equation*}
        \item Output $\pk$ to every party.
    \end{itemize}
    \item $\cF_{\Gen,2}[\PKE,\SKE,\DS,f]( r_1', \dots, r_n')$:
    \begin{itemize}
        \item Take input $r_1',\dots,r_n' \in \{0,1\}^\lambda$, define $r' \coloneqq \bigoplus_{i \in [n]} r_i'$, and compute
        \begin{equation*}
            (\pk',\sk') \coloneqq \Gensig(1^\lambda ; r').
        \end{equation*}
        \item Output $\pk'$ to every party.
    \end{itemize}
    \item $\cF_{\Comp,\Sign}[\PKE,\SKE,\DS,f](r_1,\dots,r_n, r_1', \dots, r_n', \linebreak w = (\ct_1,\dots,\ct_m),k'=(k_1', \dots k_n'))$:
    \begin{itemize}
        \item Take input $r_1,\dots,r_n \in \{0,1\}^\lambda$, define $r \coloneqq \bigoplus_{i \in [n]} r_i$, and compute
        \begin{equation*}
            (\pk,\sk) \coloneqq \Genenc(1^\lambda ; r).
        \end{equation*}
        \item Take input $r_1',\dots,r_n' \in \{0,1\}^\lambda$, define $r' \coloneqq \bigoplus_{i \in [n]} r_i'$, and compute
        \begin{equation*}
            (\pk',\sk') \coloneqq \Gensig(1^\lambda ; r').
        \end{equation*}
        \item For each $i \in [m]$, compute $x_i \coloneqq \Dec(\sk,\ct_i)$.
        \item For each $i \in [m]$, compute $k_i \coloneqq \Dec(\sk,k'_i)$.
        \item Compute $(y_1, \dots y_m) = f(x_1,\dots,x_m)$.
        \item For each $i \in [m]$, compute $ct_i' \coloneqq \Enc'(k_i,y_i)$.
        \item For each $i \in [m]$, compute $\sigma_i \gets \Sign(\sk',\ct_i') $ and output $((\ct_1',\sigma_1), \dots, (\ct_m',\sigma_m))$ to a single designated party.
    \end{itemize}
\end{itemize}
 
Finally, we formally describe the modified MPC protocol in \Cref{alg:mo_mpc_w_aborts}. 

\begin{algorithm}
    \setstretch{1.35}
    \caption{Multi-output Multi-party Computation with abort.}
    \label{alg:mo_mpc_w_aborts}
    \KwInput{Integer $h\in [n]$, a function $f \colon (\{0, 1\}^\ell)^n\rightarrow (\{0, 1\}^{\ell'})^n$, and an input $x_i\in \{0, 1\}^\ell$  for all $i\in [n]$.}
    \KwOutput{Each party $i\in [n]$ outputs $f(x_1, \cdots x_n)_i$, or aborts and outputs $\bot$.}

    \begin{algorithmic}[1]

    \State Execute $\Com$, resulting in a committee $C\subset [n]$, and local views $C_i\subset [n]$ for $i\in [n]$.
    
    \State The committee generates the encryption public key $\pk$ using the functionality $\cF_{\Gen,1}$.

    \State Each party $c\in C$ forwards the public key $pk$ to all other $i\in [n]$. If any two messages differ, output $\bot$.
    \State The committee generates the signature public key $\pk'$ using the functionality $\cF_{\Gen,2}$.
     \State Each party $c\in C$ forwards the public key $pk'$ to all other $i\in [n]$. If any two messages differ, output $\bot$.
    \State Parties $i \in [n]$ sample $r_i^*$, and compute $k_i \gets \Genenc'(1^{\lambda})$.
    \State Parties $i\in [n]$ encode their input $\ct_i \gets \Enc_{pk}(x_i)$, and their key $k_i'\gets\Enc_{pk}(k_i)$ and send them to all $c\in C_i$.

    \State The committee members $c\in C$ concatenate their received messages $m_c = \{\ct_i,k_i'\}_{i\in [n]}$, and pairwise run $\Eq_\lambda(m_{c'}, m_c)$. If the inputs are not all consistent, they abort $\bot$.

    \State The committee members engage in the encrypted functionality $\cF_{\Comp,\Sign}$ with public inputs $\{\ct_i, k'_i\}_{i\in [n]}$ to compute the output $\out = ((\ct_1',\sigma_1), \dots, (\ct_m',\sigma_m))$.

    \State Finally, one committee member forwards $(\ct_i',\sigma_i)$ to party $i$, for each $i \in [n]$.

    \State Each party $i\in [n]$ computes $\Vrfy(pk',\ct_i',\sigma_i)$. Party $i$ aborts if $\Vrfy(pk',\ct_i',\sigma_i)\neq 1$. Otherwise, each party $i\in [n]$ computes $y_i= \Dec'(k_i, \ct_i')$. 

    \end{algorithmic}

\end{algorithm}

\begin{acks}

We thank Shafi Goldwasser, Raluca Ada Popa, and Sukrit Kalra for discussions and several helpful comments. TB acknowledges support by the National Science Foundation Graduate Research Fellowship under Grant No. DGE 2146752. SK and OP acknowledge support by DARPA-ORACLEs under Grant No. 71654.

\end{acks}

\bibliographystyle{ACM-Reference-Format}
\bibliography{references_dblp}

\appendix

\section{Connectivity-based lower bound}\label{sec:lower}

In this section we prove our lower bound result. At the heart of our lower bound result is a \emph{connectivity} based argument, where we show that if the communication complexity is too small, then one honest party is doomed to be disconnected from all other honest parties, making it easy for the adversary to issue an attack. This idea is very similar to, and inspired by, a recent lower bound result for Broadcast by Blum et al.~\cite{BlumBCL23}.

\thmLB*

\begin{proof}
    We show that the lower bound holds even for Broadcast with abort, a special case of MPC. Let $\mathcal{A}$ be an arbitrary protocol solving Broadcast with abort, where the broadcasting party is $P$ with input $x$. For any party $Q$, let $N(Q)\subset [n]$ be the random subset of parties that communicate with $Q$ during an execution of $\mathcal{A}$. We claim that $\forall Q\in [n], \mathbb{E}[|N(Q)|]\geq \frac{n}{8(h-1)}$, which concludes our proof.

    Assume towards a contradiction that $\exists Q:\mathbb{E}[|N(Q)|]< \frac{n}{8(h-1)}$. Then the adversary can execute the following simple attack: the adversary decides that $Q$ is honest, picks $h-1$ other honest parties uniformly at random, and controls the rest of the parties. Let the set of honest parties be $H\subset [n]$. Since $\mathbb{E}[|N(Q)|]< \frac{n}{8(h-1)}$, by Markov's inequality, with probability at least $1/2$ we have $|N(Q)|<\frac{n}{4(h-1)}$. Conditioned on this event, since the adversary picks the other $h-1$ honest parties uniformly at random, the expected number of honest parties in $N(Q)$, $\mathbb{E}[|N(Q)\cap H||N(Q)< \frac{n}{8(h-1)}] \leq \frac{1}{4}$. Thereby, by another Markov's inequality, with probability at least $1/4$ none of the parties in $N(Q)$ are honest.

    Thus, with non-negligible probability, the honest node $Q$ is ``isolated'' and never communicates with any honest party. The adversary can now conclude the attack as follows:

    \begin{enumerate}
        \item If $Q = P$. Since all the other honest parties never communicate with the sender $P$, it can be impersonated by the malicious adversaries, making the other honest parties output a value different from $x$, violating correctness. 
        \item If $Q\neq P$. Then, $N(Q)$ can impersonate the rest of the network to $Q$, forcing them to output a value different from that of the sender $x'\neq x$, violating correctness.
    \end{enumerate}
\end{proof}

\section{Protocols with locality}
\label{section:local}

We dedicate this section to designing MPC protocols with locality. We present two protocols, which tradeoff communication complexity for locality. Both of which are based on replacing the communication on the complete graph for that on a sparse (random) routing network, and then engaging in either broadcast-with-aborts (as in \Cref{theorem:results-optimal-locality}), or our committee-based MPC protocol (in \Cref{theorem:results-local-tradeoff}) on that network. We begin by stating a theorem which achieves near-optimal locality, albeit with a far-from-optimal communication complexity:

\begin{theorem}[Near-optimal locality, restatement of \Cref{theorem:results-optimal-locality}]\label{theorem:optimal-locality}
    Assuming the hardness of LWE, for any $n\in \mathbb{N}, h \in [n]$ there exists a protocol for MPC with abort against static malicious adversaries, computing depth $D$ functions with constant-sized inputs using $O(n^3 h^{-1}\poly(\lambda, D, \log n))$ bits of communication and locality $O(\lambda nh^{-1} \log n)$. The protocol has error $\negl(\lambda)$.
\end{theorem}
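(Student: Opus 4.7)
The approach follows the two-phase strategy previewed in \Cref{subsection:overview-local}: first establish a sparse random routing network whose induced subgraph on honest parties is connected, and then implement Simultaneous Broadcast with abort on this network via a ``responsible gossip'' procedure. By \Cref{thm:MPC-from-FHE}, this suffices for secure evaluation of any depth-$D$ function with constant-size inputs. All three quantities of interest---communication, locality, and error---will be controlled by the edge density $d = \Theta(\lambda n h^{-1} \log n)$.

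\textbf{Step 1: sparse routing network.} Each party $i$ independently samples $d$ ``next hops'' from $[n]\setminus\{i\}$ and opens a bidirectional channel to each. Any honest party that receives more than $2d$ incoming connect requests immediately warns its neighbors and aborts. Two Chernoff bounds drive the analysis: (i) no honest party exceeds the $2d$ incoming cap except with probability $n^{-\Omega(\lambda)}$, because the expected number of incoming requests from honest parties is at most $d$; and (ii) once $d = \Omega(\lambda n h^{-1}\log n)$, every pair of honest parties is either directly connected or shares a common honest neighbor with overwhelming probability, so the honest induced subgraph has diameter $O(\log n)$ with probability $1 - n^{-\Omega(\lambda)}$. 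By construction, the locality of every honest party is $O(d) = O(\lambda n h^{-1}\log n)$, matching the statement.

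\textbf{Step 2: responsible-gossip simultaneous broadcast.} Each source $S$ sends its input $x_S$ to all of its neighbors. For each source $S$, every party $P$ remembers the first claim $x_S$ it ever hears and forwards it once to all of its other neighbors. If $P$ subsequently receives a conflicting claim $x'_S \neq x_S$ for the same source, or receives a warning from any neighbor, it dispatches a warning to all of its neighbors and aborts. The protocol runs for $\Theta(\log n)$ rounds so that both claims and warnings saturate the honest subgraph. Correctness follows from the diameter bound: if two honest parties ever commit to different claims for the same $S$, then some honest node on an honest path between them has witnessed the inconsistency, and the ensuing warning cascade reaches every honest party before the output step, forcing a consistent abort; UC simulation against a static adversary is then straightforward, since the simulator extracts adversarial inputs from the first messages crossing the corrupted-to-honest boundary and simulates honest gossip with random placeholders, which is perfectly indistinguishable from the real distribution conditioned on the no-equivocation event. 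Each honest party sends at most $n+1$ distinct messages to each of its $O(d)$ neighbors, so the cost per simultaneous broadcast of $B$-bit messages is $O(n^2 \cdot d \cdot B) = O(n^3 h^{-1} \cdot B \cdot \poly(\lambda,\log n))$.

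\textbf{Step 3: assembling the MPC protocol.} By \Cref{thm:MPC-from-FHE}, a depth-$D$ function with constant-size inputs and outputs can be securely evaluated using one Simultaneous Broadcast on messages of size $\poly(\lambda,D)$, plus $\ell_{\mathsf{out}} \cdot n \cdot \poly(\lambda,D) = n^2 \cdot \poly(\lambda,D)$ additional bits of point-to-point communication (partial decryptions, each carrying its own NIZK). The first call is realized by Step 2 at total cost $O(n^3 h^{-1} \poly(\lambda,D,\log n))$ bits. The remaining point-to-point messages are routed through the sparse graph along honest-connected paths of length $O(\log n)$; the NIZK attached to each partial decryption lets the ultimate recipient detect any corruption by an intermediate forwarder, so the routing adds only an $O(\log n)$ multiplicative overhead and stays within the $O(n^3 h^{-1}\poly(\lambda,D,\log n))$ budget.

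\textbf{Main obstacle.} The most delicate step is the joint probabilistic-and-adversarial analysis of the gossip primitive: one has to argue that every local inconsistency witnessed by any honest party triggers a global warning cascade that reaches all honest parties within the $\Theta(\log n)$-round budget, \emph{despite} the adversary adaptively choosing which conflicting claims to inject on which corrupted-to-honest edges. This requires tightly coupling the random-graph diameter bound with the round schedule of the gossip protocol and with the UC simulator's extraction strategy; in particular, unlike the Goldwasser--Lindell construction of \Cref{subsection:overview-goldwasser-lindell}, we cannot appeal to an all-pairs verification step, so correctness hinges entirely on the warning-cascade mechanism and on the fact that the protocol uses no PKI beyond the CRS needed for the NIZKs invoked by \Cref{thm:MPC-from-FHE}.
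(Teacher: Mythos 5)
Your overall strategy matches the paper's: establish a sparse bidirectional random routing network with degree $d=\Theta(\lambda n h^{-1}\log n)$, implement simultaneous broadcast with abort by responsible gossip over it, and then invoke \Cref{thm:MPC-from-FHE}. However, two steps as written have genuine problems. First, your justification of the graph property is wrong: with degree $d=\Theta(\lambda nh^{-1}\log n)$, two honest parties each have only $\Theta(\lambda\log n)$ honest neighbors out of $h$, so the expected number of common honest neighbors is $\Theta((\lambda\log n)^2/h)$, which is $\ll 1$ once $h\gg(\lambda\log n)^2$; hence ``directly connected or sharing a common honest neighbor'' fails, and you cannot get your $O(\log n)$ diameter bound this way. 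The paper avoids needing any diameter bound at all: it only proves \emph{connectivity} of the honest-induced subgraph (\Cref{claim:connected}, via a sequential sampling process plus pigeonhole), and its gossip protocol has no fixed round budget---each party forwards each source's message at most once, so the bit complexity is bounded independently of the number of rounds. Your fixed $\Theta(\log n)$-round schedule forces you to prove a diameter bound (which is in fact true for this random graph, but requires a proper expansion argument, not the common-neighbor claim); alternatively, drop the round cap and argue as the paper does.

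Second, your Step 3 diverges from the paper in a way that is under-specified. The paper delivers the second phase of \Cref{thm:MPC-from-FHE} (the partial decryptions with their NIZKs) by simply running the same gossip-based simultaneous broadcast again over the sparse network, which costs $O(n\cdot nd\cdot\poly(\lambda,D))=O(n^3h^{-1}\poly(\lambda,D,\log n))$ and needs no new machinery. You instead propose routing each point-to-point partial decryption ``along honest-connected paths of length $O(\log n)$,'' but parties only know their own neighbors, do not know the topology, do not know who is honest, and have no addressing/routing mechanism over the sparse graph; moreover, without PKI the claim that ``the NIZK lets the recipient detect any corruption by an intermediate forwarder'' needs an actual argument (it is plausible via NIZK soundness relative to the consistently broadcast first-round keys, but it is not automatic, and substitution/drop attacks must be reduced to aborts). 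Since flooding (i.e., gossiping) the partial decryptions already fits the $O(n^3h^{-1}\poly(\lambda,D,\log n))$ budget and the $O(\lambda nh^{-1}\log n)$ locality, the routing detour buys nothing and leaves a hole; you should replace it with a second invocation of the gossip broadcast, as the paper does.
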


We complement this result by combining its key ideas with our new communication-efficient protocols. In \Cref{theorem:local-tradeoff}, we design a protocol which sacrifices some locality---yet still is polynomially-sparser than the clique---while improving on the communication complexity of \Cref{theorem:optimal-locality}.

\begin{theorem}
 [A communication--locality tradeoff, restatement of \Cref{theorem:results-local-tradeoff}]
    \label{theorem:local-tradeoff}
    Assuming the hardness of LWE, for any $n\in \mathbb{N}, h \in [n]$ there exists a protocol for MPC with abort against static malicious adversaries, computing depth $D$ functions with constant-sized inputs using $O(n^3 h^{-3/2} \poly(\lambda, D, \log n)$ bits of communication and locality $O(\lambda n h^{-1/2}\log n)$. The protocol has error $\negl(\lambda)$.
\end{theorem}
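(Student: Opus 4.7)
The plan is to combine the sparse routing network of \Cref{theorem:optimal-locality} with the committee-based delegation of \Cref{theorem:formal_new_protocol}, as sketched in \Cref{subsection:overview-local}. First, every party locally samples a random bidirectional neighborhood of size $d = \Theta(nh^{-1}\log n)$ and aborts if it receives more than $O(d)$ incoming requests; a Chernoff bound guarantees that, except with probability $\negl(\lambda)$, the subgraph induced on the honest parties is connected. On top of this network I realize a simultaneous broadcast via ``responsible gossip'': each honest party forwards a fresh rumor to all of its neighbors at most once, and broadcasts an abort-warning (itself forwarded by every recipient) upon detecting any equivocation. Since each honest party thereby sends at most $n+1$ distinct messages along each of its $O(d)$ edges, one source broadcast costs $\tilde{O}(n \cdot d) = \tilde{O}(n^2 h^{-1})$ bits.

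Next, I would port the committee election $\Com$ from \Cref{alg:commitee_elect} onto the gossip network, tuning the self-election probability so that the expected committee size is $|C| = \tilde{\Theta}(n h^{-1/2})$; by the hitting-set argument of \Cref{lemma:hitting_independent} this yields $|C \cap \cH| = \tilde{\Omega}(\sqrt{h})$ honest committee members. Crucially, once the committee is known to the network I dispense with the routing graph and attempt to sparsify the committee--network interaction: each committee member $u$ independently samples a subset $S_u \subset [n]$ of size $s$ and directly contacts those parties to collect their $\PKE$-encrypted inputs. The committee members then run pairwise $\Eq_\lambda$ checks on their concatenated ciphertext vectors, abort on any inconsistency, and invoke the encrypted functionality $\cF[\PKE,f]$ as in \Cref{alg:mpc_w_aborts} to compute and finally re-broadcast the output.

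The main obstacle is the \emph{covering claim}: that with all but negligible probability, every honest party outside the committee lies in $S_u$ for at least one honest $u \in C \cap \cH$, so that its true ciphertext reaches the consistent view held by the honest committee members. The probability that a fixed honest party is uncovered is at most $(1-s/n)^{|C\cap\cH|} \le \exp\bigl(-s \cdot |C\cap\cH|/n\bigr)$, so a union bound over the $n$ parties demands $s \cdot |C \cap \cH| = \Omega(n \log n)$, i.e., $s \cdot |C| = \tilde{\Omega}(n^2/h)$. Subject to this constraint, the total cost $\tilde{O}\bigl(|C|\cdot n^2 h^{-1} + |C|^2\cdot s + |C|^2\bigr)$ (``local committee election'' $+$ ``committee--network interaction'' $+$ ``delegating computation'') is minimized at the symmetric choice $|C| = s = \tilde{\Theta}(n h^{-1/2})$, which yields locality $\tilde{O}(\max(d, |C|, s)) = \tilde{O}(n h^{-1/2})$ and total communication $\tilde{O}(n^3 h^{-3/2})$, as advertised.

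Correctness and security then follow as in \Cref{claim:correctness-mpc,claim:security-mpc}, by conditioning on the success of responsible gossip, $\Com$, the $\Eq_\lambda$ tests, and the covering event, each of which holds except with probability $\negl(\lambda)$; security of the computed output reduces to the security of the encrypted functionality $\cF[\PKE,f]$ under LWE, since a corrupt committee member that refuses to contact parties in its $S_u$ or lies about their ciphertexts is either detected by the pairwise equality tests within the committee or overridden by the honest committee member that covers the targeted party.
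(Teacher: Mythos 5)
Your proposal follows essentially the same route as the paper: a sparse routing network plus responsible gossip to run the committee election locally with $|C|=\tilde{\Theta}(nh^{-1/2})$, random subsets $S_u$ of size $s=\tilde{\Theta}(nh^{-1/2})$ for committee--network interaction, the identical covering claim $(1-s/n)^{|C\cap\cH|}\le e^{-s|C\cap\cH|/n}$ with a union bound (cf.\ \Cref{claim:honest_connected}), and the same parameter balancing, correctness, and security reduction to the encrypted functionality. The only point to make explicit (it is implicit in your $|C|^2\cdot s$ cost term and your security remark about honest covering members ``overriding'' liars) is the step where each committee member forwards its collected, sender-labeled ciphertexts to all other committee members before the pairwise $\Eq_\lambda$ tests, since otherwise the members hold different subsets and neither the equality checks nor the invocation of $\cF_\Comp$ on all $n$ inputs would be well-defined (this is step 6 of \Cref{alg:mpc-local}).
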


\subsection{Establishing and communicating on a sparse routing network}

To begin, the parties in the network generate a routing network to communicate over a sparse graph---thereby achieving \textit{locality}.

\begin{algorithm}[ht]
    \setstretch{1.35}
    \caption{$\sparse$, Establishing a Sparse Routing Network.}
    \label{alg:routing}
    \KwInput{An integer $\log n < h<n$.}
    \KwOutput{Each party $i\in [n]$ receives a subset $N_i\subset [n]$, or aborts $\bot$.}

    \begin{algorithmic}[1]

    \State Each party $i\in [n]$ samples a uniformly random subset $N^{out}_i\subset [n]$ of $\alpha\cdot n h^{-1} \cdot \log n$ parties.
    
    \State Each party $i\in [n]$ notifies $j\in N_i$. Let $N^{in}_i\subset [n]$ denote the notifications $i$ received. 

    \State If $\frac{1}{2}\cdot \alpha\cdot n h^{-1} \cdot \log n \leq |N_i^{in}|\leq 2\cdot \alpha\cdot n h^{-1} \cdot \log n$, $i$ outputs $\bot$. Otherwise, $N_i = N_i^{in}\cup N_i^{out}$.
    \end{algorithmic}
\end{algorithm}

\begin{claim}\label{claim:connected}
    Fix a subset $H\subset [n]$ of $h$ honest parties. After running $\sparse$, with probability all but $n^{-\Omega(\alpha)}$ either the parties abort or

    \begin{itemize}
        \item The maximum degree of the resulting graph $G$ is $O(\alpha n h^{-1}\log n)$, 
        \item The subgraph of the resulting graph $G$ induced by $H$ is connected.
    \end{itemize}
    Moreover, in the absence of malicious parties, the parties abort with probability $\leq n^{-\Omega(\alpha)}$.
\end{claim}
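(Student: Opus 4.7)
The claim has three parts, which I would handle separately using standard probabilistic arguments. A unifying observation is that the induced honest subgraph depends only on the random choices $\{N_i^{out}\}_{i\in H}$: the adversary can neither suppress nor forge notifications that pass only between honest parties. Hence the connectivity bullet can be analyzed as if there were no malicious parties at all, and I only need to worry about adversarial influence for the max-degree and no-false-abort bullets.

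\textbf{Step 1 (maximum degree).} Set $d := \alpha n h^{-1}\log n$. By construction $|N_i^{out}|=d$, and if $i$ does not abort then (under the natural reading of the abort rule in \Cref{alg:routing} as ``abort when $|N_i^{in}|$ exceeds $2d$'') we have $|N_i^{in}|\leq 2d$. Hence $|N_i|\leq 3d=O(\alpha n h^{-1}\log n)$.

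\textbf{Step 2 (honest connectivity).} For distinct $i,j\in H$, the events $\{j\in N_i^{out}\}$ and $\{i\in N_j^{out}\}$ are independent, each of probability $d/n = \alpha\log n / h$, so the (undirected) edge $\{i,j\}$ is present with probability at least $p := \alpha\log n / h$. I would follow the classical cut-based proof of connectivity: for any $S\subseteq H$ with $|S|=s\leq h/2$, the edges across the cut $(S,H\setminus S)$ are driven by independent samples, so the probability no such edge exists is at most $(1-p)^{s(h-s)} \leq \exp(-s(h-s)p) \leq \exp(-s\alpha\log n/2)$. Union-bounding via $\binom{h}{s}\leq n^s$ gives total failure probability at most $\sum_{s=1}^{h/2} n^{s(1-\alpha/2)} = n^{-\Omega(\alpha)}$ for $\alpha$ sufficiently large.

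\textbf{Step 3 (no false aborts).} With no adversary, $|N_i^{in}|$ is a sum of $n-1$ negatively correlated $\{0,1\}$-indicators (one per $j\neq i$, firing when $i\in N_j^{out}$), each of probability $d/n$. Its expectation is $(n-1)d/n\in[d/2,d]$, and a standard Chernoff bound for sums of negatively correlated Bernoullis yields $\Pr[|N_i^{in}|\notin[d/2,2d]] \leq 2\exp(-\Omega(d)) = n^{-\Omega(\alpha)}$. A union bound over the $n$ parties closes the step.

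\textbf{Anticipated obstacle.} None of the three steps is deep, but I would flag two small points. First, the independence in Step 2 uses only the outgoing samples of one side of the cut, which keeps the analysis clean in the face of any dependencies between $N_i^{out}$ and $N_i^{in}$ under adversarial behavior. Second, the abort condition as written in \Cref{alg:routing} appears to have a reversed inequality; I would first correct it to match the paper's overview (``abort if $|N_i^{in}|$ lies outside $[d/2,2d]$''), which is exactly what makes both Steps 1 and 3 go through.
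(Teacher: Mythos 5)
Your proposal is correct, and for the connectivity bullet it takes a genuinely different route from the paper. The paper proves connectivity of the honest-induced subgraph via a sequential exploration process: starting from a fixed honest party it repeatedly uses a fresh out-sample of size $d$ to reach a new honest party, shows the process survives $h/2+1$ rounds except with probability $n^{-\Omega(\alpha)}$ (each round fails with probability at most $\left(1-\tfrac{h}{2n}\right)^{d}$), and then invokes pigeonhole: if every honest party lies in an honest-connected set of size more than $h/2$, any two such sets intersect. Your argument is instead the classical Erd\H{o}s--R\'enyi-style cut bound: for every $S\subseteq H$ with $|S|=s\le h/2$, no edge crosses the cut with probability at most $(1-p)^{s(h-s)}$, and a union bound over $\binom{h}{s}\le n^{s}$ subsets gives $\sum_{s} n^{s(1-\alpha/2)}=n^{-\Omega(\alpha)}$. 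Both are sound; yours is more standard and self-contained (and your remark that honest--honest edges depend only on $\{N_i^{out}\}_{i\in H}$, hence are adversary-independent, makes the conditioning on ``no abort'' clean, and your one-sided use of $j\notin N_i^{out}$ sidesteps the without-replacement correlations, which indeed go the right way), while the paper's exploration argument avoids summing over exponentially many cuts at the cost of a somewhat informal coupling with the protocol's randomness. Your degree and no-false-abort steps coincide with the paper's (the paper only bounds the upper tail of $|N_i^{in}|$; your two-sided Chernoff matches the intended abort rule), and you are right that the inequality in \Cref{alg:routing} is a typo: the party should abort when $|N_i^{in}|$ falls \emph{outside} $[d/2,2d]$, which is the reading both proofs require. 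One nitpick: the indicators $\{i\in N_j^{out}\}_{j\neq i}$ are independent (each party samples its out-set independently), not merely negatively correlated, though this only strengthens your Step 3.
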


\begin{proof}
    The upper bound on the degree of the graph is implied by step 3 of \Cref{alg:routing}. Next, we turn to establish the non-triviality condition. In expectation, the number of incoming edges for a fixed node is $\mathbb{E}[|N_i^{in}|] \equiv d =  \alpha n h^{-1}\log n$. By a Chernoff bound, the probability $|N_i^{in}|$ exceeds twice its expectation is $n^{-\Omega(\alpha)}$. By a union bound, in the absence of dishonest parties, the probability the parties abort is $\leq n\cdot n^{-\Omega(\alpha)}$.

    Now, let us turn to the connectivity of honest parties in the network. It suffices to argue that each honest party is connected to at least $h/2 + 1$ other honest parties. The pigeonhole principle will then imply that any two parties are connected. Fix an honest party $H_0$, and consider the following probabilistic process:
    
    \begin{enumerate}
        \item Initialize $i=1$. Initialize the set of permitted parties $S$ to all honest parties but $H_0$, $\coloneqq \cH \setminus \{H_0 \}$.
        \item Choose a set $R \subseteq [n]$ of size $d$ uniformly at random. If $R \cap S = \emptyset$, halt. Else, let $H_i$ be any party in $R \cap S$.
        \item Update $i \coloneqq i + 1$, and $S \coloneqq S \setminus \{ H_i \}$.
    \end{enumerate}
    
    Note that this process is a different view on the very same process by which hops are formed in the protocol. We claim that the probability that this process halts before $h/2 + 1$ rounds is at most $\negl(n)$. Indeed, observe that for any $i \leq h/2$, the probability that the process halts at round $i$ (conditioned on not halting in prior rounds) is
    \begin{equation*} 
        \left( \frac{n-h + i}{n} \right) ^ {d} < (1 - \frac{h}{2n})^{d} \leq 2^{-\Theta(h \cdot d/n)} \leq n^{-\Omega(\alpha)}
    \end{equation*}
    Two more union bounds (one over the $O(h)$ rounds, and another over at most $h$ parties) conclude the claim.

\end{proof}

The $\gossip$ protocol below shows how to implement the simultaneous broadcast functionality over the graph $G$ generated using $\sparse$. In a nutshell, the parties repeatedly forward their received messages until each message has been forwarded by each node once. Although this protocol is local, it naturally comes at a communication cost.

\begin{algorithm}[ht]
    \setstretch{1.35}
    \caption{$\gossip$, A Sparse Broadcast Protocol.}
    \label{alg:gossip}
    \KwInput{Private inputs $\{x_i\}_{i\in [n]}$, and a graph $G$.}
    \KwOutput{Each party $j\in [n]$ receives every other $x_i:i\in [n]$, or aborts $\bot$.}

    \begin{algorithmic}[1]

    \State Assuming $x_i\neq \nulll$, each party $i\in [n]$ sends their input, $(x_i, i)$ together with their ID to their neighbors on $G$.

    \State Each party $i\in [n]$ repeatedly forwards their received messages to their neighbors, without forwarding twice messages from the same ID.

    \State If any two received messages with same ID $i$ differ, then output $\bot$.
    
    \end{algorithmic}
\end{algorithm}

\begin{claim}\label{claim:gossip}
    Let $k$ out of the $n$ parties be given inputs of length $\ell$. After running $\gossip$, with probability all but $n^{-\Omega(\alpha)}$

    \begin{itemize}
        \item Correctness: Either the parties reject, or the honest parties agree on their outputs.
        \item Non-triviality: In the absence of malicious parties, all parties agree on their outputs. 
        \item Communication complexity: The total amount of bits of communication sent is $O(\alpha k\cdot  h^{-1} n^2\cdot \poly(l, \log n))$.
    \end{itemize}
\end{claim}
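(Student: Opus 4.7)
}
The plan is to handle the three items separately, using \Cref{claim:connected} (applied to the graph $G$ produced by $\sparse$) as the main structural input. Throughout, I will fix an adversary, condition on the event that $G$ has maximum degree $d = O(\alpha n h^{-1}\log n)$ and that its restriction to the honest set $\cH$ is connected. This event holds with probability at least $1-n^{-\Omega(\alpha)}$.

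For \textbf{communication complexity}, I would count bits sent under an all-honest execution. Each honest party transmits at most one copy of each distinct labeled message $(x_i,i)$ across each of its at most $d$ edges, by the ``no double-forwarding per ID'' rule. Since only $k$ parties hold non-$\nulll$ inputs, only $k$ distinct IDs ever appear in the system in an all-honest run, giving at most $k\cdot d$ outgoing messages per party, each of length $O(\ell+\log n)$ bits. Summing over the $n$ parties yields $O(n\cdot k\cdot d\cdot(\ell+\log n)) = O(\alpha k n^2 h^{-1}\poly(\ell,\log n))$, as claimed. Notice that we only need the degree bound guaranteed by $\sparse$ (not connectivity) for this part.

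For \textbf{correctness}, the crux is to show that whenever two non-aborting honest parties hold a message labeled by the same source ID $i$, those messages coincide. I would prove this by message-propagation along the honest subgraph: suppose honest $H_1$ stores $(m,i)$ and honest $H_2$ stores $(m',i)$ with $m\neq m'$ at the end of the protocol. By connectivity of $\cH$ in $G$, there is a path $H_1 = H_{(0)},H_{(1)},\ldots,H_{(t)} = H_2$ consisting entirely of honest parties. Induct along this path: let $j$ be the smallest index such that $H_{(j)}$ ever holds $(m,i)$; then $H_{(j)}$ forwards $(m,i)$ to $H_{(j+1)}$ the first time it is received, since by minimality $H_{(j+1)}$ had not previously sent $(m,i)$ back to $H_{(j)}$. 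Iterating to the first index $j^\star$ where the forwarded value differs from what is stored locally (which exists because $m\neq m'$ at the end), party $H_{(j^\star)}$ observes two conflicting messages with ID $i$ and outputs $\bot$, contradicting the assumption that no honest party on the path aborts. For honest source IDs $i$ in particular, the initial message $(x_i,i)$ reaches every honest party through the same propagation argument, so all non-aborting honest parties agree on the correct value $x_i$.

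Finally, for \textbf{non-triviality}, in an execution with no malicious parties every source ID $i$ is ever associated with a single value $x_i$, so the ``two conflicting messages'' condition in Step 3 of \Cref{alg:gossip} is never triggered and no party aborts; correctness from the previous paragraph then delivers the full input vector to every party. The main obstacle I anticipate is being careful in the inductive forwarding argument above—specifically, ensuring the ``first-arrival'' indexing is well-defined and that the no-double-forwarding rule does not accidentally block the propagation needed to detect an equivocation. The connectivity of $\cH$ from \Cref{claim:connected} is what rules out the adversary ``partitioning'' honest parties into isolated pockets that never cross-check, and it is the one nontrivial structural ingredient the argument relies on.
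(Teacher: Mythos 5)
Your proposal is correct and follows essentially the same route as the paper's (much terser) proof: condition on the degree bound and honest-subgraph connectivity from \Cref{claim:connected}, bound communication by the per-ID once-per-edge forwarding rule over the sparse graph, and obtain agreement via propagation along honest paths with equivocation detection forcing an abort. Your write-up simply spells out the propagation/induction details that the paper leaves implicit.
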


\begin{proof}
    From \Cref{claim:connected}, we ensure that the graph $G$ connects all the honest parties. Then, unless a party detects an equivocation and rejects, the honest parties successfully receive the inputs of the other honest parties. The communication length follows from a bound on the total number of edges in the graph.
\end{proof}

Our protocol for MPC with abort with optimal locality simply applies this local protocol for simultaneous broadcast to our encrypted functionality of \Cref{thm:MPC-from-FHE}. This concludes the proof of \Cref{theorem:optimal-locality}.

\subsection{Local committee election}

Using the local communication protocols developed in the previous subsection, we can now establish a local committee election protocol (in \Cref{alg:localcommittee}).

\begin{algorithm}[ht]
    \setstretch{1.35}
    \caption{$\localcom$, Local Committee Election.}
    \label{alg:localcommittee}
    \KwInput{An integer $h\in [n]$.}
    \KwOutput{Each party $i\in [n]$ receives a subset $C_i\subset [n]$, or aborts $\bot$.}

    \begin{algorithmic}[1]

    \State Run $\sparse$, resulting in a graph $G$.
    
    \State Each party $i\in [n]$ flips a coin with bias $p = \min(1, \alpha \cdot \frac{\log n}{h^{1/2}})$, resulting in a bit $b_i\in \{0, 1\}$.
    
    \State Run $\gossip$, where parties with $b_i=0$ have $\nulll$ input. 

    \State Let $C_i\subset [n]$ denote the bits $b_j=1$ received by $i\in [n]$ for $j\neq i$. If $|C_i|\geq 2\cdot p\cdot n$, output $\bot$.

    \State Each $i, j\in [n]$ s.t. $j\in C_i$ and $i\in C_j$ engage in $\Eq_\lambda(C_i, C_j)$. 

    \State If no pair rejects, then party $i$ ``receives'' $C_i$. Otherwise, output $\bot$.
    \end{algorithmic}
\end{algorithm}

\begin{claim}\label{claim:local_committee}
    After \Cref{alg:localcommittee}, then with probability $1-n^{-\Omega(\alpha)}$, either a party has aborted or 

    \begin{enumerate}
        \item $\geq \frac{1}{2}\alpha h^{1/2} \log n$ honest parties were sampled in step 1, \textit{and}
        \item All the honest parties sampled in step 1 agree on their view $C_i\equiv C$.
    \end{enumerate}

    Moreover, $|C|\leq 2\alpha nh^{-1/2} \log n$, and the total amount of bits communicated is $\alpha^2 n^3 h^{-3/2}\log^3 n$.
    \end{claim}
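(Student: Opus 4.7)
The plan is to verify the four assertions of the claim---the hitting-set lower bound on honest self-elections, the upper bound on $|C|$, the consistency of honest committee members' views, and the communication cost---by plugging into the building blocks $\sparse$, $\gossip$, and $\Eq_\lambda$ whose tail bounds we already have from Claims~\ref{claim:connected} and~\ref{claim:gossip} and Lemma~\ref{lemma:equality}.

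First I would handle the two quantitative bounds on the sampled set together, via standard hitting-set and Chernoff arguments. Each honest party flips $b_i = 1$ independently with probability $p = \alpha \log n / h^{1/2}$, so the expected number of honest self-elections is $ph = \alpha h^{1/2} \log n$, and Lemma~\ref{lemma:hitting_independent} gives a lower bound of half this except with probability $n^{-\Omega(\alpha)}$. The same Chernoff bound over all $n$ parties gives $|\{i : b_i = 1\}| \leq 2pn = 2\alpha n h^{-1/2} \log n$ with the same probability, so step~4's check does not trigger in an honest execution; in an adversarial execution we either abort (allowed by the claim) or inherit the upper bound $|C| \leq 2\alpha n h^{-1/2} \log n$ by conditioning on the check passing.

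Next I would argue that any two honest self-elected parties end up with the same $C_i$. By Claim~\ref{claim:gossip}, $\gossip$ either aborts or ensures every honest party receives identical bits from every other honest party, so discrepancies in $C_i$ among honest self-elected parties can only come from malicious equivocation. The key observation---and the step I expect to be the main subtlety---is that any two honest self-elected parties $i_1$ and $i_2$ satisfy $b_{i_1} = b_{i_2} = 1$, so $i_1 \in C_{i_2}$ and $i_2 \in C_{i_1}$, and hence the pairing condition of step~5 is met, triggering an $\Eq_\lambda(C_{i_1}, C_{i_2})$ invocation; by Lemma~\ref{lemma:equality} this catches any discrepancy except with probability $n^{-\lambda}$. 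A union bound over the $O(|C|^2)$ equality tests then yields agreement with probability $1 - n^{-\Omega(\alpha)}$, provided we instantiate $\lambda = \Omega(\alpha)$ (as the theorem already assumes).

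Finally I would sum the three sources of communication. The $\sparse$ stage contributes $O(\alpha n^2 h^{-1} \log n)$ bits of notifications. The $\gossip$ stage, invoked on $k = |C| = O(\alpha n h^{-1/2} \log n)$ inputs of size $O(\log n)$ each, costs $O(\alpha \cdot k \cdot h^{-1} n^2 \polylog n) = O(\alpha^2 n^3 h^{-3/2} \polylog n)$ bits by Claim~\ref{claim:gossip}. The pairwise equality tests of step~5 cost $O(|C|^2 \cdot \lambda \log n)$ bits, which is dominated by the gossip term once $|C|$ is substituted. Adding these, the gossip stage dominates and yields the advertised $O(\alpha^2 n^3 h^{-3/2} \log^3 n)$ bound.
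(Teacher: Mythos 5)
Your proposal is correct and follows essentially the same route as the paper's (much terser) proof: item 1 via the hitting-set \Cref{lemma:hitting_independent}, item 2 via the receival guarantee of \Cref{claim:gossip} combined with the pairwise $\Eq_\lambda$ tests of \Cref{lemma:equality}, and the communication bound by summing the $\sparse$, $\gossip$, and equality-test costs with the gossip term dominating. Your elaborations (the step-4 check enforcing $|C| \leq 2pn$ unless an abort occurs, and the observation that mutual membership $i_1 \in C_{i_2}$, $i_2 \in C_{i_1}$ guarantees the step-5 pairing condition for honest self-elected parties) are accurate fillings-in of details the paper leaves implicit.
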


    \begin{proof}
        Item 1 follows from the hitting set \Cref{lemma:hitting_independent}. Item 2 follows from the receival guarantees of \Cref{claim:gossip} and its bit complexity, together with that of the equality test in \Cref{lemma:equality}.
    \end{proof}

\subsection{MPC with abort}

To conclude this section, we build on our prior results on MPC to establish a protocol which tradeoffs communication complexity for locality. Its description is presented in \Cref{alg:mpc-local}, and a formal statement of its guarantees in \Cref{theorem:local-tradeoff}.

\begin{algorithm}[ht]
    \setstretch{1.35}
    \caption{A Local Protocol for Multi-party Computation with abort.}
    \label{alg:mpc-local}
    \KwInput{Integer $\log n < h<n$, a function $f:(\{0, 1\}^\ell)^n\rightarrow \{0, 1\}^{\ell'}$ and an input $x_i\in \{0, 1\}^\ell:$ $\forall i\in [n]$.}
    \KwOutput{Each party $i\in [n]$ outputs $f(x_1, \cdots x_n)$, or a party aborts $\bot$.}

    \begin{algorithmic}[1]

    \State Execute $\localcom$, resulting in a committee $C\subset [n]$, and local views $C_i\subset [n]$ for $i\in [n]$.
    
    \State The committee generates $(pk, sk_c)_{c\in C}$ pairs using the encrypted functionality $\cF_\Gen$.

    \State Each committee member $c\in C$ chooses a uniformly random subset $S_c\subset [n]$ of size $n h^{-1/2}$.

    \State Each $c\in C$ forwards the public key $pk$ to all $i\in S_c$. If any two messages differ, output $\bot$.

    \State Each $i\in [n]$ encodes their input $\ct_i = \Enc_{pk}(x_i)$, and send them to all $c$ s.t. $i\in S_c$.

    \State Each $c\in C$ concatenates their received messages $\ct_{S_c} = \{(i, \ct_i)\}_{i\in [S_c]}$ together with sender information, and forwards them to all other $c'\in C$. If any two received ciphers $(i, \ct_i)$ differ for a same $i\in [n]$, output $\bot$.

    \State Each $c\in C$ concatenates their received messages $m_c = \{\ct_i\}_{i\in [n]}$ and pairwise run $\Eq_\lambda(m_{c'}, m_c)$. If they don't all accept, output $\bot$.

    \State The committee members engage in the encrypted functionality $\cF_\Comp$ with public inputs $\{\ct_i\}_{i\in [n]}$ and private inputs $\{sk_c\}_{c\in C}$, to compute the output $\out = f(x_1, \cdots, x_n)$.

    \State Finally, each $c\in C$ forwards $\out$ to all the members of $S_c$. If any two received messages differ, they abort.
    \end{algorithmic}

\end{algorithm}

At a high level, \Cref{alg:mpc-local} attempts to leverage the communication-efficiency of the committee-based MPC protocols we developed in \Cref{sec:upper}, however, implemented in a local manner. To do so, we need to ensure that the committee members themselves communicate with few members of the rest of the network, and yet, each member of the network still communicates with at least an honest committee member. For this purpose, in step 3 of \Cref{alg:mpc-local}, the Committee members randomly (and obliviously) partition the network into (overlapping) subsets.

To later prove security and correctness of our scheme, we begin with a ``covering'' claim, which both stipulates that each the parties in the network are connected to honest parties in the committee and that their

\begin{claim}\label{claim:honest_connected}
    With probability all but $1-n^{-\Omega(\alpha)}$, by the end of step 3 of \Cref{alg:mpc-local}, either the parties have aborted or for every $i\in [n]\setminus C$ there exists an honest committee member $c\in C\cap H$ such that $i\in S_c$. 
\end{claim}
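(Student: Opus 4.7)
The plan is to combine the guarantees of $\localcom$ (\Cref{claim:local_committee}) with a coupon-collector style calculation, followed by a union bound over the $n$ parties. The key conceptual point is that an honest committee member $c$ picks $S_c$ uniformly at random and independently of the adversary: since the adversary is static, the honest set $H$ is fixed before the protocol begins, and the randomness used by honest committee members to sample their subsets in step 3 is independent of everything the adversary does.

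First I would condition on the high-probability event (from \Cref{claim:local_committee}) that the protocol has not aborted by the end of $\localcom$, that the honest committee members share a consistent view $C$, and that $|C \cap H| \geq \tfrac{1}{2}\alpha h^{1/2}\log n$. By \Cref{claim:local_committee}, the complement of this event occurs with probability at most $n^{-\Omega(\alpha)}$, so it suffices to bound the failure probability of the covering property conditioned on this event. If the parties abort at step 3 itself, the claim is trivially satisfied, so I may also assume we reach the end of step 3 without abort.

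Next, I would fix an arbitrary party $i \in [n]\setminus C$ and bound the probability that no honest committee member covers $i$. For any single $c \in C \cap H$, the set $S_c$ is a uniformly random size-$nh^{-1/2}$ subset of $[n]$, so
\begin{equation*}
  \Pr[i \in S_c] \;=\; \frac{nh^{-1/2}}{n} \;=\; h^{-1/2}.
\end{equation*}
Because distinct honest committee members use independent randomness, the events $\{i \in S_c\}_{c \in C \cap H}$ are mutually independent, and therefore
\begin{equation*}
  \Pr\!\bigl[\,\forall c \in C \cap H:\ i \notin S_c\,\bigr]
  \;\leq\; \bigl(1 - h^{-1/2}\bigr)^{|C \cap H|}
  \;\leq\; \exp\!\bigl(-h^{-1/2}\cdot \tfrac{1}{2}\alpha h^{1/2}\log n\bigr)
  \;=\; n^{-\alpha/2}.
\end{equation*}
A union bound over the at most $n$ choices of $i$ then yields that, conditioned on the good event from $\localcom$, the covering property fails with probability at most $n \cdot n^{-\alpha/2} = n^{-\Omega(\alpha)}$. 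Combining with the $n^{-\Omega(\alpha)}$ failure probability of $\localcom$ concludes the proof.

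The main subtlety to get right is the independence argument: one has to verify that the adversary cannot bias the honest committee members' choices of $S_c$. This is handled by the static nature of the adversary together with the fact that step 3 is a purely local sampling step using the honest parties' private randomness, so the bound $\Pr[i \in S_c] = h^{-1/2}$ holds unconditionally over the adversary's view, and independence across honest $c$'s follows from independent sampling. Everything else is a routine Chernoff/union-bound calculation.
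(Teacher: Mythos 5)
Your proposal is correct and follows essentially the same route as the paper's proof: condition on the $\geq \tfrac{1}{2}\alpha h^{1/2}\log n$ honest committee members guaranteed by \Cref{claim:local_committee}, note $\Pr[i \in S_c] = h^{-1/2}$, bound the uncovered probability by $(1-h^{-1/2})^{|C\cap H|} \leq n^{-\Omega(\alpha)}$, and union bound over $i \in [n]$. Your added remarks on independence and the static adversary are a slightly more explicit justification of a step the paper treats implicitly, but the argument is the same.
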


\begin{proof}
    From \Cref{claim:local_committee}, with probability all but $1-n^{-\Omega(\alpha)}$, there are $\geq \frac{1}{2}\alpha h^{1/2} \log n$ honest parties in the committee. Now, fix $i\in [n]$. The probability $i\in S_c$ is $|S_c|/n = h^{-1/2}$. Thereby, 

    \begin{align}
        \mathbb{P}[\nexists c\in C\cap H: i\in S_c]&\leq \bigg(1- \frac{|S|}{n}\bigg)^{\frac{1}{2}\alpha h^{1/2} \log n}\\
        &\leq
        e^{-\Omega(\alpha \cdot \log n)}
    \end{align}

    \noindent A union bound over $i\in [n]$ concludes the argument.
\end{proof}

To prove \Cref{theorem:local-tradeoff}, we proceed in a sequence of 3 claims, which analyze the locality, the communication complexity, the correctness, and finally the security of \Cref{alg:mpc-local}.

\begin{claim}\label{claim:local-locality}
    The \emph{locality} of \Cref{alg:mpc-local} is $O(\lambda \cdot  n h^{-1/2}\cdot \log n)$.
\end{claim}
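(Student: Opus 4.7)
The plan is to bound the number of distinct peers that any party $i\in [n]$ communicates with during an execution of \Cref{alg:mpc-local} by enumerating the contributions from each step. The steps naturally split into three categories: (i) the sparse routing network set up inside $\localcom$ in step 1, (ii) all-pairs interactions inside the committee (the equality tests at the end of $\localcom$, and the encrypted functionality calls and equality tests in steps 2, 6, 7, 8), and (iii) communication between committee members and the rest of the network via the subsets $S_c$ (steps 4, 5, and 9).

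For (i), \Cref{claim:connected} guarantees that the routing graph output by $\sparse$ has maximum degree $O(\lambda n h^{-1}\log n)$ with all but negligible probability, so every party contributes at most this many peers from the gossip phase of $\localcom$. For (ii), \Cref{claim:local_committee} bounds the committee size by $|C| = O(\lambda nh^{-1/2}\log n)$, so each committee member interacts with at most $|C| - 1$ other committee members across all the pairwise equality tests and the MPC invocations for $\cF_\Gen$ and $\cF_\Comp$; non-committee members contribute nothing in this category. For (iii), each committee member $c$ talks to exactly the $|S_c| = n h^{-1/2}$ parties in its subset, which is below the claimed bound.

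The only component that genuinely needs a probabilistic argument is the number of subsets $S_c$ that a fixed non-committee member $i$ belongs to, since, unlike $\sparse$, there is no built-in ``abort if over-subscribed'' check in step 3. Each of the $|C|$ subsets is chosen independently and uniformly, so $|\{c \in C : i \in S_c\}|$ is a sum of independent Bernoullis with expectation $|C| \cdot |S_c|/n = O(\lambda n h^{-1}\log n)$. A Chernoff bound followed by a union bound over the $n$ parties shows that, with probability $1 - n^{-\Omega(\lambda)}$, no party lies in more than $O(\lambda n h^{-1}\log n)$ subsets, which is in turn within the claimed $O(\lambda n h^{-1/2}\log n)$ bound.

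Combining the three categories, every party communicates with at most $O(\lambda n h^{-1/2}\log n)$ distinct peers with probability all but $n^{-\Omega(\lambda)}$. The main subtlety will be category (iii): unlike the sparse routing network, the partition into subsets $S_c$ carries no explicit degree check enforced by the protocol, so the bound for non-committee members must rely entirely on the concentration of the independent random inclusions rather than on a structural guarantee built into the algorithm.
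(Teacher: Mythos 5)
Your proposal is correct and uses the same decomposition as the paper's proof: the degree of the routing graph established inside $\localcom$ (from \Cref{claim:connected}), intra-committee contacts bounded by the committee size $|C|$ (from \Cref{claim:local_committee}), and committee--network contacts through the subsets $S_c$. The one place you diverge is your category (iii): the paper needs no concentration argument there, because a non-committee party can in any case be contacted by at most $|C| = O(\lambda n h^{-1/2}\log n)$ committee members, and this trivial deterministic bound already matches the claimed locality. Your Chernoff-plus-union-bound argument is valid (the inclusions are independent Bernoullis with mean $|C|\,|S_c|/n = \Omega(\log n)$, giving failure probability $n^{-\Omega(\lambda)}$) and yields the sharper bound $O(\lambda n h^{-1}\log n)$ for that component, but it is unnecessary for the statement as claimed; note also that it implicitly assumes committee members sample their $S_c$ honestly, whereas the trivial $|C|$ bound holds no matter how malicious committee members choose their subsets, which is the cleaner route the paper takes.
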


\begin{proof}
    The locality of the protocol is the number of parties each node communicates with during the $\localcom$ protocol, within the committee, and during the network partition. In this manner, it is bounded by 
    \begin{equation*}
        (\text{Degree of }G) + |S_c| + |C|\leq O(\alpha\cdot n\cdot h^{-1/2}\log n)
    \end{equation*}

    Where we extract the properties of $G$ from \Cref{claim:connected}. The choice of $\alpha = \lambda$ concludes the claim. 
\end{proof}

\begin{claim}\label{claim:local-comm}
    The \emph{communication complexity} of \Cref{alg:mpc-local} is
    \begin{equation*}
        O(n^3 h^{-3/2}\cdot \poly(\lambda, \log n, D))
    \end{equation*}
\end{claim}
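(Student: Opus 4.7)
The plan is to walk through each step of \Cref{alg:mpc-local} and bound its contribution to the total communication, then observe that one step dominates. Throughout I set $\alpha = \lambda$, so \Cref{claim:local_committee} gives $|C| \leq \tilde{O}(n h^{-1/2})$, and by construction $|S_c| = n h^{-1/2}$. Let $b = \poly(\lambda, D)$ be an upper bound on the bit-length of the public key, each ciphertext, and each secret-key share produced by the underlying (T)FHE scheme, and absorb the additional $O(\log n)$-factor in the effective circuit depth of $\cF_\Comp$ into $\poly(\lambda, D, \log n)$, exactly as in the proof of \Cref{claim:communication-mpc}.

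The costs of the ``$\tilde{O}(n^2/h)$-type'' steps are routine to tally: step 1 contributes $\tilde{O}(n^3 h^{-3/2})$ by \Cref{claim:local_committee}; steps 2 and 8 invoke \Cref{thm:MPC-from-FHE} on a $|C|$-party committee with inputs of size $\poly(\lambda, D)$ and output of total size at most $|C| \cdot \poly(\lambda, D)$, costing $\tilde{O}(|C|^2 \poly(\lambda, D)) = \tilde{O}(n^2 h^{-1} \poly(\lambda, D))$; step 4 forwards the public key along $|C| \cdot |S_c| = \tilde{O}(n^2 h^{-1})$ edges at $b$ bits each; step 5 sends each of the $n$ ciphertexts along at most $|C|$ edges (and in the all-honest case concentrates at $\tilde{O}(n h^{-1})$ edges per party by a Chernoff bound), giving at most $\tilde{O}(n^2 h^{-1/2} b)$ bits; step 7 runs $|C|^2$ equality tests from \Cref{lemma:equality} on strings of length $nb$, costing $\tilde{O}(|C|^2 \lambda \log(nb)) = \tilde{O}(n^2 h^{-1} \poly(\lambda, D, \log n))$; and step 9 sends a constant-size output along $|C| \cdot |S_c| = \tilde{O}(n^2 h^{-1})$ edges.

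The one step whose cost is larger by a polynomial factor in $h$, and which I expect to be the main obstacle (as it is the reason the bound is $n^3 h^{-3/2}$ rather than $n^2 h^{-1}$), is step 6: every committee member $c$ forwards its entire bundle of received ciphertexts to every other committee member, so each of the $|C|^2$ ordered pairs carries up to $|S_c| \cdot b$ bits, for a total of
\begin{equation*}
|C|^2 \cdot |S_c| \cdot b \;=\; \tilde{O}(n^2 h^{-1}) \cdot n h^{-1/2} \cdot \poly(\lambda, D) \;=\; \tilde{O}(n^3 h^{-3/2}) \cdot \poly(\lambda, D).
\end{equation*}
This matches the cost of step 1 and dominates every other step by at least a $\sqrt{h}$ factor, so summing over all steps yields the claimed bound $O(n^3 h^{-3/2} \cdot \poly(\lambda, D, \log n))$. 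The only subtlety worth flagging is step 5: a clean accounting should either appeal to a Chernoff bound on the number of committee members covering each party (yielding $\tilde{O}(n h^{-1})$ covers per honest party in any all-honest execution), or simply use the trivial upper bound of $|C|$ covers per party, which is already dominated by step 6.
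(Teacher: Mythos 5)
Your proposal is correct and follows essentially the same argument as the paper: a step-by-step accounting using $|C|\leq \Tilde{O}(\alpha n h^{-1/2})$, $|S_c|\leq n h^{-1/2}$, $b\leq \poly(\lambda,D)$, and $\alpha=\lambda$, with the total dominated by the local committee election (step 1) and the intra-committee forwarding of ciphertext bundles (step 6), each costing $\Tilde{O}(n^3h^{-3/2})\cdot\poly(\lambda,D)$. Your per-step bounds are in places slightly tighter than the paper's (e.g., $|C|\cdot|S_c|$ instead of $|C|\cdot n$ for steps 4, 5, and 9, and the Chernoff remark for step 5), but these refinements do not change the decomposition or the conclusion.
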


\begin{proof}
    From \Cref{claim:local_committee}, the communication complexity of the local committee election protocol is $\alpha^2 n^3 h^{-3/2}\log^3 n$. If the size of the keys $\pk, \sk_i$, the ciphertexts $\ct_i$, and the output $\out$ are bounded by $b$, then steps 4, 5, and 9 require $O(\max(b, \log n)\cdot |C|\cdot n)$ bits of communication, due to the network-committee communication. In turn, steps 6 requires $O(\max(b, \log n)\cdot |C|^2\cdot |S|_c)$, and, from the properties of the equality check in \Cref{lemma:equality}, step 7 requires $O(\lambda\cdot \log n \cdot |C|^2)$ bits. 

     In turn, from \Cref{thm:MPC-from-FHE}, the communication complexity of the encrypted functionality in steps 2 and 8 is $O(|C|\cdot n \cdot \poly(\lambda, D))$, and $b\leq \poly(\lambda, D)$. Under the bound $|C|\leq \Tilde{O}(\alpha nh^{-1/2})$ for the committee size, the bound $|S_c|\leq nh^{-1/2}$ on the network partitions, and the choice $\alpha=\lambda$ we conclude the desired claim.
    
\end{proof}

\begin{claim}\label{claim:local-correct-and-secure}
Assuming the hardness of LWE, \Cref{alg:mpc-local} is correct, and secure with error $\negl(\lambda)$.
\end{claim}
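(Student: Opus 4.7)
The strategy is to closely mirror the proofs of \Cref{claim:correctness-mpc} (correctness) and \Cref{claim:security-mpc} (security) for the non-local MPC protocol, with three additional events on which we must condition: the success of $\localcom$ (\Cref{claim:local_committee}), the success of the covering step (\Cref{claim:honest_connected}), and the success of all equality tests invoked during step 7 (by \Cref{lemma:equality}). Each of these events holds with probability $1-n^{-\Omega(\lambda)}$, so by a union bound, we may condition on all of them simultaneously at the cost of adding a $\negl(\lambda)$ term to the error.

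For correctness, conditioning on the above events, \Cref{claim:local_committee} guarantees that the honest committee members $C \cap H$ share a consistent view of $C$, and \Cref{claim:honest_connected} guarantees that each party $i \in [n]$ is contained in $S_c$ for at least one honest $c \in C \cap H$. Hence each honest party's (encrypted) input reaches some honest committee member in step 5, and via step 6 it is forwarded across the committee. Any inconsistency in the ciphertexts received for the same index $i$ triggers an abort in step 6 or 7; otherwise, the equality tests in step 7 certify that all honest committee members hold the same vector $\{\ct_i\}_{i \in [n]}$. Correctness of the encrypted functionality $(\cF_\Gen, \cF_\Comp)$ (\Cref{thm:MPC-from-FHE}) then ensures that the honest committee members compute $\out = f(x_1, \dots, x_n)$ correctly in step 8. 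Finally, by \Cref{claim:honest_connected} again, every $i \in [n]$ has some honest $c \in C \cap H$ with $i \in S_c$ forwarding $\out$ to it in step 9; the abort-on-mismatch check then guarantees either that all honest parties output $\out$ or abort. Non-triviality in the all-honest setting follows from the non-triviality guarantees of each building block ($\localcom$, $\Eq_\lambda$, and $\cF$).

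For security, I will construct a simulator $\Sim$ for \Cref{alg:mpc-local} from the simulator $\Sim_\cF$ guaranteed by \Cref{def:secProto} applied to the encrypted functionality $\cF = (\cF_\Gen, \cF_\Comp)$. The simulator internally emulates $\sparse$, $\localcom$, and the random subsets $S_c$ faithfully with the adversary. By \Cref{claim:local_committee}, with probability $1 - \negl(\lambda)$ the committee $C$ contains at least one honest party, so $\Sim$ can route the messages of $\cF_\Gen$ and $\cF_\Comp$ through $\Sim_\cF$ without ever exposing the full secret key to the adversary. The network-partition step, committee-to-committee forwarding, and equality tests are simulated honestly on the transcript produced by $\Sim_\cF$; any equivocation attempted by the adversary is caught by step 6 or 7 and results in an abort, which $\Sim$ forwards to the trusted party via the selective-abort list $\mathcal{L}$. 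Indistinguishability of the real and ideal executions then reduces to (i) the $\negl(\lambda)$ error of the conditioning events above and (ii) the security of $\cF$, which in turn relies on LWE via \Cref{thm:MPC-from-FHE}.

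The main obstacle I anticipate is cleanly arguing that the added steps unique to the local protocol (the partitioning $\{S_c\}_{c \in C}$ and the inter-committee forwarding in step 6) do not enable new adversarial strategies. The key point is that the partition is chosen by committee members (at least one of whom is honest) using fresh randomness independent of any malicious ciphertexts, and that any discrepancy an adversary tries to introduce---either by delivering mismatched ciphertexts to different honest committee members in step 5, or by reporting different $\ct_i$'s in step 6---is detected with overwhelming probability by the $\Eq_\lambda$ call in step 7 before $\cF_\Comp$ is invoked. Thus the adversarial view reduces to one against $\cF$ alone, and the security reduction goes through.
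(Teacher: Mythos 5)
Your proposal is correct and follows essentially the same route as the paper: condition on the success of $\localcom$, the covering event of \Cref{claim:honest_connected}, and the equality tests via a union bound, then reduce correctness and security to those of the encrypted functionality $\cF$ exactly as in \Cref{claim:correctness-mpc,claim:security-mpc}. Your write-up is in fact more detailed than the paper's (which only sketches the reduction), but the underlying argument is the same.
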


\begin{proof}
    The event that either the parties abort, or the local committee election protocol succeeds (\Cref{claim:local_committee}), that every party $i\in [n]$ in the network is connected to at least one honest party in the committee (\Cref{claim:honest_connected}), and all the equality tests are correct (\Cref{lemma:equality}), occurs with probability all but $n^{-\Omega(\alpha)}$ from \Cref{claim:honest_connected} and a union bound. 

    The proof of correctness and security of the protocol is now analogous to the proof of \Cref{claim:security-mpc}. Indeed, an adversary which breaks the security of \Cref{alg:mpc-local} can be turned into an adversary for $\mathcal{F}$ by conditioning on the events above. The choice of $\alpha = \lambda$ concludes the claim.
\end{proof}

Put together, \Cref{claim:local-locality,claim:local-comm,claim:local-correct-and-secure} conclude the proof of \Cref{theorem:local-tradeoff}.
\end{document}